\newtheorem{thm}{Theorem}
\newtheorem{prop}{Proposition}
\newtheorem{lemma}{Lemma}
\theoremstyle{definition}
\newtheorem{rmk}{Remark}
\theoremstyle{remark}
\DeclareMathOperator{\tr}{tr}
\DeclareMathOperator{\Var}{Var}
\newcommand{\Hilbert}{\mathcal{H}}
\newcommand{\GAP}{\textup{GAP}}
\date{October 1, 2025}
\title{Long-Time Behavior of Typical Pure States from Thermal Equilibrium Ensembles}
\author{Cornelia Vogel\thanks{ORCID: 0000-0002-3905-4730, E-mail: cornelia.vogel@unimi.it}}
\affil{Dipartimento di Matematica, Universit\`a degli Studi di Milano, Via Cesare Saldini 50, 20133 Milano, Italy}
\begin{document}

\maketitle

\begin{abstract}
We consider an isolated macroscopic quantum system in a pure state $\psi_t$ evolving unitarily in a separable Hilbert space $\mathcal{H}$ and take for granted that different macro states $\nu$ correspond to mutually orthogonal subspaces $\mathcal{H}_\nu\subset\mathcal{H}$. Let $P_\nu$ be the projection to $\mathcal{H}_\nu$. It was recently shown that for all Hamiltonians with no highly degenerate eigenvalues and gaps most $\psi_0\in\Hilbert_\mu$  are such that for most $t\geq 0$, $\|P_\nu\psi_t\|^2$ is close to a $t$- and $\psi_0$-independent value $M_{\mu\nu}$ provided that $M_{\mu\nu}$ is not too small. Here, ``most'' refers to the uniform distribution on the sphere $\mathbb{S}(\Hilbert_\mu)$. In the present work, we generalize this result from the uniform distribution, corresponding to the micro-canonical ensemble, to the much more general class of Gaussian adjusted projected (GAP) measures. For any density matrix $\rho$ on $\mathcal{H}$, $\mathrm{GAP}(\rho)$ is the most spread out distribution on $\mathbb{S}(\mathcal{H})$ with density matrix $\rho$. We show that also for $\mathrm{GAP}(\rho)$-most $\psi_0\in\mathcal{H}$ for most $t\geq 0$, $\|P_\nu\psi_t\|^2$ is close to a fixed value $M_{\rho P_\nu}$ (which must not be too small). Moreover, we prove a generalization for certain operators $B$ instead of $P_\nu$ and for finite times. Since certain GAP measures are quantum analogs of the (grand-)canonical ensemble, our result expresses a version of equivalence of ensembles.

\medskip

{\bf Key words:} von Neumann's quantum ergodic theorem; equilibration; thermalization; Gaussian adjusted projected (GAP) measure; Scrooge measure; random wave function; quantum statistical mechanics; macroscopic quantum systems; long-time behavior

\end{abstract}

\section{Introduction}
We consider a closed macroscopic quantum system in a pure state and investigate its long-time behavior. This approach in order to study equilibration and thermalization has been very fruitful over the last three decades; important discoveries include \textit{dynamical typicality} \cite{BG09,MGE11,BRGSR18,Reimann2018a,Reimann2018b,RG20,TTV23phys}, \textit{canonical typicality} \cite{Lloyd,GM03,GMM04,PSW06,GLTZ06,TTV24}, the \textit{eigenstate thermalization hypothesis} (ETH) from a physicist's as well as a mathematician's point of view \cite{Deutsch91,Srednicki94,Erdoes21,CEH23,ER24}, and, only very recently, the rigorous study of the thermalization of a system of free fermions \cite{ST24,T24,T24b,RTTV24}. 

Another important result was the rediscovery and further elaboration of \textit{normal typicality} \cite{GLMTZ10,GLTZ10,Reimann2015} which goes back to von Neumann~\cite{vonNeumann29}. The goal of the present work is to generalize (a generalization of) normal typicality from the uniform distribution on the sphere to the so-called \textit{Gaussian adjusted projected (GAP) measures}. To any density matrix $\rho$ on $\Hilbert$ we can associate a measure $\GAP(\rho)$ which is, roughly speaking, the most spread out distribution over $\mathbb{S}(\Hilbert)$ with density matrix $\rho$. For certain density matrices these measures arise as the thermal (and chemical) equilibrium distribution of wave functions and they can be regarded as quantum analogs of the canonical (and grand-canonical) ensemble from classical statistical mechanics \cite{GLTZ06b,GLMTZ15,ITV25}. A mathematically rigorous definition of $\GAP(\rho)$ as well as precise notations can be found in Section~\ref{sec: background}.

\subsection{Normal Typicality}\label{subsec: NT}
We start with recalling the notion of normal typicality and first introduce the setting: 
Following von Neumann~\cite{vonNeumann29}, we take for granted that the system's Hilbert space $\Hilbert$ can be decomposed into mutually orthogonal subspaces $\Hilbert_\nu$ (``macro spaces'') associated with different macro states $\nu$ such that
\begin{align}
    \Hilbert=\bigoplus_\nu \Hilbert_\nu.
\end{align}
Von Neumann thought of the macro spaces $\Hilbert_\nu$ as being the joint eigenspaces of a set of mutually commuting self-adjoint and highly degenerate operators $M_1,\dots,M_K$ (``macroscopic observables''). Each macro space is then characterized by a list of eigenvalues $\nu=(m_1,\dots,m_K)$ of the operators $M_1,\dots,M_K$ and the elements from a macro space $\Hilbert_\nu$ look ``macroscopically the same'' in the sense that every $\psi\in\Hilbert_\nu$ is an eigenfunction of all $M_j$ with corresponding eigenvalues $m_j$. 

Usually, a coarse-grained version of the system's Hamiltonian $H$ is among the $M_j$ and therefore each $\Hilbert_\nu$ is a subset of an \textit{energy shell} $\Hilbert_{\mathrm{mc}}=\mathbbm{1}_{[E-\Delta E, E]}(H)\Hilbert$. Here, $[E-\Delta E, E]$ is a micro-canonical energy interval with $\Delta E$ (the resolution of macroscopic energy measurements) being small on the macroscopic scale but sufficiently large on the microscopic scale such that $[E-\Delta E,E]$ contains a very large (but typically finite) number of eigenvalues\footnote{Note that here and in the following we always assume that $H$ has pure point spectrum.} of $H$ each with finite multiplicity.
This implies in particular that even if we allow $\Hilbert$ to be infinite-dimensional, the $\Hilbert_\nu$ should be of finite dimension. Moreover, in each energy shell there usually is one macro space that contains most dimensions of the shell and which we associate with thermal equilibrium; we denote it by $\Hilbert_{\mathrm{eq}}$.

Let $P_\nu$ be the orthogonal projection to $\Hilbert_\nu$ and let $\mathbb{S}(\Hilbert) = \{\psi\in\Hilbert: \|\psi\|=1\}$ denote the sphere in $\Hilbert$. If $D:=\dim\Hilbert<\infty$, for most $\phi\in\mathbb{S}(\Hilbert)$ (where ``most'' refers to the uniform distribution on $\mathbb{S}(\Hilbert)$) we have that
\begin{align}
    \|P_\nu\phi\|^2 \approx \frac{d_\nu}{D} \quad \forall \nu,\label{eq: NT}
\end{align}
where $d_\nu := \dim\Hilbert_\nu$ provided that $d_\nu$ and $D$ are sufficiently large \cite{GLMTZ10}. If the eigenbasis of $H$ is chosen purely randomly (i.e., according to the Haar measure) among all orthonormal bases (and under some further technical not very restrictive assumptions), every initial wave function $\psi_0\in\mathbb{S}(\Hilbert)$ evolves such that for most times $t\geq 0$,
\begin{align}
    \|P_\nu\psi_t\|^2 \approx \frac{d_\nu}{D},
\end{align}
where $\psi_t = \exp(-iHt)\psi_0$ (we set $\hbar=1$),
provided that $d_\nu$ and $D$ are sufficiently large~\cite{GLMTZ10,GLTZ10,Reimann2015,vonNeumann29}. This phenomenon is known as ``normal typicality''.

In particular, if the macro state $\nu$ represents thermal equilibrium, we write $d_\nu=d_{\mathrm{eq}}$ and have that for every $\psi_0\in\mathbb{S}(\Hilbert)$,
\begin{align}
    \|P_{\mathrm{eq}}\psi_t\|^2 \approx \frac{d_{\mathrm{eq}}}{D} \approx 1
\end{align}
for most $t\geq 0$, i.e., every state spends most of the time in thermal equilibrium. Therefore we have thermalization in the sense that also non-equilibrium initial states sooner or later reach, at least approximately, the thermal equilibrium macro space and stay there for most of the time. This sense of thermal equilibrium is often called ``macroscopic thermal equilibrium'' (MATE), see, e.g., \cite{GHLT15,GHLT16,GLMTZ10b,Tas16,Ueda18,GLTZ20,LPSW09}. Note that here it is important that the statement holds true for \textit{every} initial wave function; if it was only true for ``most'' $\psi_0$, the statement would not tell us much about thermalization because ``most'' states in $\mathbb{S}(\Hilbert)$ are in thermal equilibrium anyway.

\subsection{A Generalization of Normal Typicality for Arbitrary Hamiltonians}

Von Neumann's assumption that the eigenbasis of the Hamiltonian $H$ is Haar-distributed is not physically realistic because in this case it is unrelated to the decomposition of $\Hilbert$ into the macro spaces and therefore the system goes very rapidly from any (possibly very non-equilibrium) macro space almost immediately to the thermal equilibrium one~\cite{GHT13,GHT14,GHT15}. However, if the system starts in a very non-equilibrium state, it should pass through larger and larger (and therefore less far away from thermal equilibrium) macro spaces until it finally reaches $\Hilbert_{\mathrm{eq}}$.

In \cite{TTV23phys}, the notion of normal typicality was therefore generalized in the following way: for a general Hamiltonian $H$, for most $\psi_0\in\mathbb{S}(\Hilbert_\mu)$ (where ``most'' again refers to the uniform distribution over $\mathbb{S}(\Hilbert_\mu)$) with a possibly non-equilibrium macro state $\mu$ for most times $t\geq 0$,
\begin{align}
    \|P_\nu\psi_t\|^2 \approx M_{\mu\nu} \quad \forall \nu\label{eq: GNT}
\end{align}
for suitable values $M_{\mu\nu}$ provided that the eigenvalues and eigenvalue gaps of $H$ are not too highly degenerate (``generalized normal typicality''). We remark that the $M_{\mu\nu}$ depend also an $\mu$ but we still expect that in relevant cases, $M_{\mu\nu} \approx d_\nu/D$. 

While \eqref{eq: NT} holds in the sense that the absolute as well as the relative errors are small, \eqref{eq: GNT} was in \cite{TTV23phys} only proved in the sense that the absolute error is small. In order to show that the relative errors are small as well, we need a lower bound on the quantities $M_{\mu\nu}$. In \cite{TTV23math} such a lower bound was obtained in the case that the Hamiltonian is of the form $H=H_0+V$, where $H_0$ is a deterministic Hermitian matrix and $V$ is a Hermitian random Gaussian perturbation, by making use of results from random matrix theory. We remark that \eqref{eq: GNT} can also be shown for arbitrary bounded operators $B$, i.e., with $\|P_\nu\psi_t\|^2$ replaced by $\langle\psi_t|B|\psi_t\rangle$ and $M_{\mu\nu}$ replaced by a suitable quantity $M_{\mu B}$. Moreover, a finite time result is available, however, the times required for a small error are usually extremely large, see, e.g., \cite{TTV23phys} for a detailed discussion.

\subsection{Brief Discussion of the Main Result}
In the present work we generalize \eqref{eq: GNT} in the sense that we replace the notion of ``most'', which there refers to the uniform distribution over the sphere $\mathbb{S}(\Hilbert_\mu)$ of some macro space $\Hilbert_\mu$, with some other measure on $\mathbb{S}(\Hilbert)$, namely with \textit{GAP measures}. The acronym GAP stands for \textit{Gaussian adjusted projected} measure \cite{JRW94,GLTZ06b} and it refers to one possible way of how these measures can be constructed.\footnote{GAP measures we first introduced by Jozsa, Robb and Wootters \cite{JRW94} in 1994 in an information theoretical context. More precisely, they showed that $\GAP(\rho)$ minimizes the so-called ``accessible information'' under the constraint that the density matrix of the ensemble is given by $\rho$. They named it \textit{Scrooge measure}, referring to Ebenezer Scrooge, the protagonist of Charles Dickens' novella \textit{A Christmas Carol} (1843) who is a very stingy character. The authors chose this name as the GAP measure is ``particularly stingy with its information''.}

As mentioned above, to any density matrix $\rho$ on $\Hilbert$ we can associate a measure $\GAP(\rho)$ on $\mathbb{S}(\Hilbert)$, see Section~\ref{sec: background} for a mathematically precise definition and construction of $\GAP(\rho)$ based on its acronym. Note that if $\rho=P_\mu/d_\mu$ then $\GAP(\rho)$ is just the uniform distribution over $\mathbb{S}(\Hilbert_\mu)$ and we recover the generalized normal typicality result from \cite{TTV23phys} (up to some constants). We remark already here that $\GAP(\rho)$ can, in contrast to the uniform distribution, also be defined on separable Hilbert spaces, i.e., the Hilbert space does not have to be of finite dimension. If $\rho=\rho_{\mathrm{can}}$, i.e., if $\rho$ is of the form
\begin{align}
    \rho=\rho_{\mathrm{can}} = \frac{1}{Z} e^{-\beta H},
\end{align}
where $\beta$ is the inverse temperature and $Z$ a normalization constant, $\GAP(\rho)$ arises as the thermal equilibrium distribution of the wave function. More precisely, if $\Hilbert$ is a micro-canonical energy shell, then for most $\psi\in\mathbb{S}(\Hilbert)$ the wave function of a small subsystem\footnote{By ``wave function of a subsystem'' we mean the \textit{conditional wave function} \cite{GLTZ06b, GLMTZ15}.} (that is only weakly interacting with its environment) is typically approximately $\GAP(\rho)$-distributed, where $\rho$ is a canonical density matrix for the subsystem, see \cite{GLTZ06b, GLMTZ15} for details. This justifies to view certain GAP measures as quantum analogs of the canonical ensemble of classical mechanics. We remark that similar considerations can also be made in the grand-canonical setting, see \cite{ITV25}.

Our main result in the present paper is that also for most $\psi_0\in\mathbb{S}(\Hilbert)$, where ``most'' now refers to $\GAP(\rho)$, for most $t\geq 0$,
\begin{align}
    \langle\psi_t|B|\psi_t\rangle \approx M_{\rho B}\label{eq: new result}
\end{align}
with suitable quantities $M_{\rho B}$ provided that $\|\rho\|$ is small, $\|B\|$ is not too large compared to $|M_{\rho B}|$ and the eigenvalues and eigenvalue gaps of the self-adjoint operator $H$, which is assumed to have pure point spectrum, are not too highly degenerate. Here, $B$ is any operator if $\Hilbert$ is finite-dimensional. If $\dim\Hilbert=\infty$, we restrict the class of operators $B$, roughly speaking, to the bounded ones which act non-trivially only on finitely many eigenspaces of $H$ and whose image is contained in the span of only finitely many eigenspaces. 

As discussed above, the projections $P_\nu$ to the macro spaces should fulfill this assumption as the macro spaces are usually contained in an energy shell which typically is finite-dimensional. Note that here we restrict our considerations to the absolute errors; however, if $B=P_\nu$ or $\rho=P_\nu/d_\nu$ for some macro state $\nu$, the relative bounds from \cite{TTV23math} apply. 

Our result shows that for $\GAP(\rho)$-most $\psi_0\in\mathbb{S}(\Hilbert)$ the curve $t\mapsto\langle\psi_t|B|\psi_t\rangle$ is nearly constant in the long run $t\to\infty$. We remark that for $\GAP(\rho)$-most $\psi_0\in\mathbb{S}(\Hilbert)$, the curve $t\mapsto \langle\psi_t|B|\psi_t\rangle$ is nearly deterministic (``dynamical typicality''); this has been shown in \cite{TTV24}. The generalization of (generalized) normal typicality to $\GAP(\rho)$ reveals that it is not a phenomenon specific for the uniform distribution but it is also valid for a broader class of natural and physically relevant distributions which can also be defined on infinite-dimensional Hilbert spaces. As some GAP measures are quantum analogs of the canonical and grand-canonical ensemble whereas the uniform distribution on an energy shell $\Hilbert_{\mathrm{mc}}$ is an analog of the micro-canonical ensemble, our result shows that the phenomenon of generalized normal typicality occurs for all three statistical ensembles. Therefore it can be regarded as expressing a version of equivalence of ensembles.

The main ingredients of our proof are a further improvement of bounds on the $\GAP(\rho)$-variance of $\langle\psi|B|\psi\rangle$ (the first version was obtained by Reimann~\cite{Reimann08} and then slightly improved in \cite{TTV24}) as well as Lévy's Lemma for GAP measures \cite{TTV24}, a concentration-of-measure-type result for Lipschitz continuous functions on the sphere.

\subsection{Structure of the Paper}
The remainder of this paper is organized as follows: In Section~\ref{sec: background}, we give some background and present the mathematical setup. In Section~\ref{sec: main}, we formulate and discuss our main result. In Section~\ref{sec: proof}, we provide the proofs. Finally, in Section~\ref{sec: conclusion}, we conclude.

\section{Background and Mathematical Setup\label{sec: background}}
In the following, we make precise some notions that appeared in the introduction, give a mathematically rigorous definition of the GAP measures and introduce the system's Hamiltonian and its relevant quantities that we will need to state our main result in Section~\ref{sec: main}. Throughout this paper we assume that $\Hilbert$ is a separable Hilbert space, i.e., that it has either a finite or countably infinite orthonormal basis.

\paragraph{Measures of ``most''.}
Let $\mathbb{P}$ be a probability measure on $\mathbb{S}(\Hilbert)$ and let $\varepsilon>0$. We say that, w.r.t. $\mathbb{P}$, a statement $s(\psi)$ is true for $(1-\varepsilon)$-most $\psi\in\mathbb{S}(\Hilbert)$ if
\begin{align}
    \mathbb{P}\left\{\psi\in\mathbb{S}(\Hilbert): s(\psi)\mbox{ holds}\right\} \geq 1-\varepsilon.
\end{align}
Let $T,\delta>0$. Analogously we say that a statement $S(t)$ is true for $(1-\delta)$-most $t\in[0,T]$ if
\begin{align}
    \frac{1}{T}\,\lambda\left\{t\in [0,T]: S(t)\mbox{ holds}\right\} \geq 1-\delta,
\end{align}
where $\lambda$ denotes the Lebesgue measure on $\mathbb{R}$. Moreover, we say that a statement $S(t)$ is true for $(1-\delta)$-most $t\in[0,\infty)$ (or also $t\geq 0$) if
\begin{align}
    \liminf_{T\to\infty} \frac{1}{T}\,\lambda\left\{t\in[0,T]: S(t)\mbox{ holds}\right\} \geq 1-\delta.
\end{align}

\paragraph{Time averages.}
Let $T>0$ and let $f:\mathbb{R}_+\to\mathbb{C}$. We define the finite time average $\langle\,\cdot\,\rangle_T$ of $f$ over the interval $[0,T]$ by
\begin{align}
    \langle f(t)\rangle_T := \frac{1}{T}\int_0^T f(t)\, dt.
\end{align}
Moreover, the infinite time average of $f$ is given by
\begin{align}
    \overline{f(t)} := \lim_{T\to\infty} \frac{1}{T}\int_0^T f(t)\, dt
\end{align}
whenever this limit exists.

\paragraph{Norms.} 
Throughout this paper we need two norms, the operator norm and the trace norm. Let $M$ be an operator on $\Hilbert$. Its \textit{operator norm} is given by
\begin{align}
    \|M\| := \sup_{\|\psi\|=1} \|M\psi\|.
\end{align}
If $M$ is self-adjoint, $\|M\|$ is equal to the largest absolute eigenvalue of $M$. In general, $\|M\|$ is equal to the square root of the largest eigenvalue of $M^*M$ where $M^*$ is the adjoint operator of $M$.

The \textit{trace norm} of $M$ is defined as
\begin{align}
    \|M\|_{\tr} := \tr |M| = \tr\sqrt{M^* M}.
\end{align}
If $M$ is self-adjoint, $\|M\|_{\tr}$ is equal to the sum of the absolute eigenvalues of $M$.

\paragraph{Density matrix.}
Let $\mathbb{P}$ be a probability measure on $\mathbb{S}(\Hilbert)$. From the measure $\mathbb{P}$ we can obtain a density matrix $\rho_{\mathbb{P}}$ on $\Hilbert$ by
\begin{align}
    \rho_{\mathbb{P}} := \int_{\mathbb{S}(\Hilbert)} \mathbb{P}(d\psi)\, |\psi\rangle\langle\psi| 
\end{align}
and we say that the measure $\mathbb{P}$ has density matrix $\rho_{\mathbb{P}}$.
Note that this density matrix always exists, see, e.g., Lemma~1 in \cite{Tum20}. If the mean of $\mathbb{P}$ is equal to zero, $\rho_{\mathbb{P}}$ is the covariance matrix of $\mathbb{P}$.

\paragraph{GAP measure.}
In this paragraph, we give a mathematically rigorous definition of the measure $\GAP(\rho)$ on $\mathbb{S}(\Hilbert)$ for a density matrix $\rho$ on $\Hilbert$. There are several equivalent definitions of $\GAP(\rho)$ in the case that $\Hilbert$ is finite-dimensional, see \cite{GLMTZ15}. In the following, we present the one which gave the GAP measures their name and, for simplicity, we restrict ourselves to the finite-dimensional case. A similar construction can also be done in the case that $\dim\Hilbert=\infty$; we refer to \cite{Tum20} for details. 

Let $\rho$ be a density matrix on a finite-dimensional Hilbert space $\Hilbert$. Then we can write it as
\begin{align}
    \rho = \sum_n p_n |n\rangle\langle n|,
\end{align}
where the $|n\rangle$ form an orthonormal eigenbasis of $\rho$ with corresponding eigenvalues $p_n$.

As the name \textit{Gaussian adjusted projected measure} suggests, we start from a Gaussian measure $G(\rho)$ on $\Hilbert$; it is constructed as follows: Let $(Z_n)$ be a sequence of independent complex-valued centered Gaussian random variables\footnote{Recall that a complex-valued centered Gaussian random variable $Z$ with variance $\sigma^2>0$ has independent real and imaginary part $\mbox{Re}\, Z$ and $\mbox{Im}\, Z$, $\mathbb{E} \mbox{Re}\, Z = \mathbb{E} \mbox{Im}\, Z=0$ and $\mathbb{E}(\mbox{Re} Z)^2 = \mathbb{E}(\mbox{Im}\,Z)^2=\sigma^2/2$.} with variances
\begin{align}
    \mathbb{E}|Z_n|^2 = p_n.
\end{align}
The Gaussian measure $G(\rho)$ is then defined as the distribution of the random vector
\begin{align}
    \Psi^{G} := \sum_n Z_n |n\rangle.
\end{align}
While $G(\rho)$ is centered and has density matrix $\rho$, it is not a distribution on the sphere $\mathbb{S}(\Hilbert)$: It follows immediately from the definition of $\Psi^G$ together with the fact that the $p_n$ sum up to 1 that $\mathbb{E}\|\Psi^G\|^2=1$ but in general $\|\Psi^G\|\neq 1$. 

In the next step, we have to adjust the measure $G(\rho)$; this adjustment is necessary because if we would project $G(\rho)$ directly to the sphere $\mathbb{S}(\Hilbert)$, the resulting measure would not have the property that it has density matrix $\rho$. We define the \textit{adjusted Gaussian measure} $\mbox{GA}(\rho)$ on $\Hilbert$ by
\begin{align}
    \mbox{GA}(\rho)(d\psi) := \|\psi\|^2 G(\rho)(d\psi),
\end{align}
i.e., we multiply the density of $G(\rho)$ by the factor $\|\psi\|^2$. It follows from $\mathbb{E}\|\Psi^G\|^2=1$ that also $\mbox{GA}(\rho)$ is a probability measure on $\Hilbert$.

In the last step of the construction, we project the measure $\mbox{GA}(\rho)$ on $\Hilbert$ to the sphere $\mathbb{S}(\Hilbert)$. To this end, let $\Psi^{\mathrm{GA}}$ be a $\mbox{GA}(\rho)$-distributed random vector. Then we define $\GAP(\rho)$ to be the distribution of the random vector
\begin{align}
    \Psi^{\GAP} := \frac{\Psi^{\mathrm{GA}}}{\|\Psi^{\mathrm{GA}}\|}.
\end{align}
Obviously, this measure is a distribution on $\mathbb{S}(\Hilbert)$ and a short computation shows that $\GAP(\rho)$ indeed has density matrix $\rho$:
\begin{align}
    \rho_{\GAP(\rho)} = \int_{\mathbb{S}(\Hilbert)} \GAP(\rho)(d\psi) |\psi\rangle\langle\psi| = \int_{\Hilbert} \mbox{GA}(\rho)(d\psi) \frac{1}{\|\psi\|^2} |\psi\rangle \langle\psi| = \int_{\Hilbert} G(\rho)(d\psi) |\psi\rangle\langle\psi| =\rho. 
\end{align}

\paragraph{The Hamiltonian.}
Let $H$ be a self-adjoint operator on $\Hilbert$ with pure point spectrum.
We write $H$ in the spectral decomposition
\begin{align}
    H=\sum_{e\in\mathcal{E}} e\, \Pi_e,
\end{align}
where $\mathcal{E}$ denotes the set of distinct eigenvalues of $H$ and $\Pi_e$ is the projection onto the eigenspace of $H$ with corresponding eigenvalue $e$. We define $d_E:=|\mathcal{E}|$, where $|\{\cdot\}|$ denotes the number of elements of the set $\{\cdot\}$, and $D_E:=\max_{e\in\mathcal{E}}\tr(\Pi_e)$, i.e., $d_E$ is the number of distinct eigenvalues of $H$ and $D_E$ is the maximum degeneracy of an eigenvalue of $H$. Moreover, we define the maximal gap degeneracy $D_G$ by
\begin{align}
    D_G := \max_{E\in\mathbb{R}}\left|\left\{(e,e')\in\mathcal{E}\times\mathcal{E}: e\neq e' \mbox{ and } e-e'=E\right\} \right|.
\end{align}
Let $\kappa>0$. The maximal number of gaps in an energy interval of length $\kappa$ is given by
\begin{align}
    G(\kappa) := \max_{E\in\mathbb{R}}\left|\left\{(e,e')\in\mathcal{E}\times\mathcal{E}: e\neq e' \mbox{ and } e-e' \in [E,E+\kappa) \right\}\right|
\end{align}
and obviously $D_G = \lim_{\kappa\to 0^+} G(\kappa)$.

If $\Hilbert$ is infinite-dimensional, these quantities are not necessarily finite. In particular, $d_E$ and $D_E$ cannot both be finite. Let $B$ be a bounded operator on $\Hilbert$. We will see in the proofs that only those eigenvalues $e,e'$ with $\Pi_e B \Pi_{e'}\neq 0$ contribute to the quantities we want to compute and we will require that there are only finitely many such $e,e'$ which will imply that the sums over $e,e'$ are effectively finite. This is equivalent to assuming that there are only finitely many eigenvalues $e$ such that $\Pi_e B \neq 0$ or $B\Pi_e \neq 0$. We therefore define for an operator $B$ the set of ``contributing'' eigenvalues by
\begin{align}
    \mathcal{E}_B &:= \{e\in\mathcal{E}: \Pi_e B \neq 0 \mbox{ or } B\Pi_e \neq 0\}.
\end{align}
Note that if all eigenvalues of $H$ have finite multiplicity (which usually is the case, see also Section~\ref{subsec: NT}), $|\mathcal{E}_B|<\infty$ implies that $B$ is of finite rank (and therefore bounded).

 We also define the quantities $d_E, D_E, D_G$ and $G(\kappa)$ relative to $B$:
\begin{align}
    d_{E,B} &:= |\mathcal{E}_B|,\\
    D_{E,B} &:= \max_{e\in\mathcal{E}_B} \tr(\Pi_e),\\
    D_{G,B} &:= \max_{E\in\mathbb{R}}\left|\left\{(e,e')\in\mathcal{E}_B\times\mathcal{E}_B: e\neq e' \mbox{ and } e-e'=E \right\}\right|,\\
    G_B(\kappa) &:= \max_{E\in\mathbb{R}}\left|\left\{ (e,e')\in\mathcal{E}_B\times\mathcal{E}_B: e\neq e' \mbox{ and } e-e'\in[E,E+\kappa)\right\}\right|.
\end{align}

\section{Main Result \label{sec: main}}
\subsection{Statement}
After having introduced the relevant quantities and notations in the previous section, we are now ready to state our main result.

\begin{thm}[Normal Typicality for $\GAP(\rho)$]\label{thm: NT GAP}
Let $\Hilbert$ be a separable Hilbert space of dimension $\geq 4$.
    Let $B$ be a bounded operator with $d_{E,B}<\infty$ and let $\rho$ be a density matrix with $\|\rho\|<1/4$. Let $\varepsilon,\delta,\kappa,T>0$ and define
    \begin{align}
        M_{\rho B} := \sum_{e\in\mathcal{E}} \tr(\rho \Pi_e B \Pi_e).
    \end{align}
    Then, w.r.t. $\GAP(\rho)$, $(1-\varepsilon)$-most $\psi_0\in\mathbb{S}(\Hilbert)$ are such that for $(1-\delta)$-most $t\in [0,T]$
    \begin{align}
        \Bigl|\langle\psi_t|B|\psi_t\rangle - M_{\rho B}\Bigr| &\leq \left(\min\left\{\frac{188}{\varepsilon\delta},\frac{25\log(24/(\varepsilon\delta))}{\delta C}\right\} \|B\|^2 \|\rho\| D_{E,B} G_B(\kappa)\left(1+\frac{8\log_2 d_{E,B}}{\kappa T}\right)\right)^{1/2},\label{ineq: MR finite time}
    \end{align}
   where $C=\frac{1}{288\pi^2}$.
Moreover, w.r.t. $\GAP(\rho)$, $(1-\varepsilon)$-most $\psi_0\in\mathbb{S}(\Hilbert)$ are such that for $(1-\delta)$-most $t\in[0,\infty)$,
\begin{align}
    \Bigl|\langle\psi_t|B|\psi_t\rangle - M_{\rho B}\Bigr| &\leq \left(\frac{188}{\varepsilon\delta} \|B\|^2 \|\rho\| D_{E,B} D_{G,B}\right)^{1/2}.\label{ineq: MR infinite time}
\end{align}
\end{thm}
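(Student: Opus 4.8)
The plan is to control, for $\GAP(\rho)$-typical $\psi_0$, the entire time-averaged square deviation
\[
\overline{\bigl|\langle\psi_t|B|\psi_t\rangle-M_{\rho B}\bigr|^2},
\]
and then to convert this single quantity into the two-sided ``most $\psi_0$ / most $t$'' statement by two applications of concentration of measure, one in $\psi_0$ and one in $t$. First I would expand in the eigenbasis of $H$: writing $f_{\psi_0}(t):=\langle\psi_t|B|\psi_t\rangle=\sum_{e,e'}e^{i(e-e')t}\langle\psi_0|\Pi_e B\Pi_{e'}|\psi_0\rangle$, the infinite time average annihilates every off-diagonal term, so $\overline{f_{\psi_0}(t)}=\langle\psi_0|\hat B|\psi_0\rangle$ with $\hat B:=\sum_e\Pi_e B\Pi_e$, and since $\GAP(\rho)$ has density matrix $\rho$ its $\GAP(\rho)$-mean is exactly $\tr(\rho\hat B)=M_{\rho B}$. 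Orthogonality of the time-mean and the time-fluctuation then yields the exact splitting $\overline{|f-M_{\rho B}|^2}=\overline{|f-\overline f|^2}+|\overline f-M_{\rho B}|^2$, so it suffices to bound the $\GAP(\rho)$-expectation of each piece separately.

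The central quantity is therefore $\sigma^2:=\mathbb{E}_{\GAP(\rho)}\overline{|f_{\psi_0}(t)-M_{\rho B}|^2}=\mathbb{E}_{\GAP(\rho)}\overline{|f-\overline f|^2}+\Var_{\GAP(\rho)}\bigl(\langle\psi_0|\hat B|\psi_0\rangle\bigr)$. Both summands are fourth moments of $\GAP(\rho)$: the second is literally a $\GAP(\rho)$-variance of a quadratic form, and the first, after taking the time average, is a sum $\sum\langle\psi_0|\Pi_e B\Pi_{e'}|\psi_0\rangle\,\langle\psi_0|\Pi_{g'}B^*\Pi_g|\psi_0\rangle$ constrained to resonant gap pairs $e-e'=g-g'$ (with $e\neq e'$). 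Estimating these is the heart of the matter; here I would invoke and sharpen the bound on the $\GAP(\rho)$-variance of $\langle\psi|\cdot|\psi\rangle$ from \cite{Reimann08,TTV24}. Because only eigenvalues in $\mathcal{E}_B$ survive, the sums are effectively finite, and counting the resonant pairs produces at most a factor $D_{G,B}$ per contributing eigenvalue, giving $\sigma^2\lesssim\|B\|^2\|\rho\|D_{E,B}D_{G,B}$ in the infinite-time case; the hypothesis $\|\rho\|<1/4$ is precisely what keeps the adjustment/projection corrections (by which $\GAP(\rho)$ departs from the clean Gaussian Wick formula for $G(\rho)$) under control.

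Finally I would run the two concentration steps. Set $g(\psi_0):=\bigl(\overline{|f_{\psi_0}(t)-M_{\rho B}|^2}\bigr)^{1/2}$; since $\psi_0\mapsto f_{\psi_0}(t)-M_{\rho B}$ is $2\|B\|$-Lipschitz on $\mathbb{S}(\Hilbert)$ uniformly in $t$, Minkowski's inequality shows $g$ is $2\|B\|$-Lipschitz as well, and Jensen gives $\mathbb{E}_{\GAP(\rho)}g\le\sigma$. There are now two ways to bound $g$ for most $\psi_0$: plain Markov on $g^2$, giving $g^2\le\sigma^2/\varepsilon$ for $(1-\varepsilon)$-most $\psi_0$ (this is the source of the $1/(\varepsilon\delta)$ constant), or Lévy's Lemma for $\GAP$ measures \cite{TTV24}, which gives $g\le\mathbb{E}_{\GAP(\rho)}g+O\bigl(\sqrt{\|\rho\|}\,\|B\|\sqrt{\log(1/(\varepsilon\delta))}\bigr)$ and hence the logarithmic constant; the stated $\min$ simply keeps whichever is smaller. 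A further Markov inequality in $t$ upgrades ``$g(\psi_0)$ small'' to ``$|f_{\psi_0}(t)-M_{\rho B}|^2\le g(\psi_0)^2/\delta$ for $(1-\delta)$-most $t$'', and a union bound over the two $\psi_0$-events yields \eqref{ineq: MR infinite time}. The finite-time bound \eqref{ineq: MR finite time} follows the identical scheme, except that the finite average $\langle\cdot\rangle_T$ no longer annihilates off-diagonal terms exactly: the coefficient $T^{-1}\int_0^T e^{i(e-e')t}\,dt$ is only $O\bigl(1/(|e-e'|T)\bigr)$, so one bins gaps into intervals of width $\kappa$ (at most $G_B(\kappa)$ gaps each) and sums a dyadic series over bin separations, which is exactly what generates the factor $G_B(\kappa)\bigl(1+8\log_2 d_{E,B}/(\kappa T)\bigr)$.

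The step I expect to be hardest is the sharp fourth-moment estimate under $\GAP(\rho)$. Unlike the uniform measure or a genuine Gaussian (where Wick's theorem applies directly), $\GAP(\rho)$ is a reweighted and renormalized Gaussian, so its variance of $\langle\psi|B|\psi\rangle$ carries correction terms beyond $\tr(\rho B\rho B)$ that must be bounded uniformly; controlling these tightly enough to both extract the numerical constants in \eqref{ineq: MR finite time}–\eqref{ineq: MR infinite time} and to exploit $\|\rho\|<1/4$ is where the main technical effort lies.
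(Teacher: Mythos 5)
Your overall architecture coincides with the paper's: a sharpened bound on the $\GAP(\rho)$ fourth moments (the role played by Lemma~\ref{lem: var GAP} and Proposition~\ref{prop: Var inequalities}), Markov in $\psi_0$ and Markov in $t$ for the $188/(\varepsilon\delta)$ bound, the Short--Farrelly dyadic binning producing $G_B(\kappa)\bigl(1+8\log_2 d_{E,B}/(\kappa T)\bigr)$, and L\'evy's Lemma for the logarithmic bound. Two of your steps are genuinely different, and both are legitimate. First, you use the exact orthogonality $\overline{|f-M_{\rho B}|^2}=\overline{|f-\overline{f}|^2}+|\overline{f}-M_{\rho B}|^2$ and a single Markov application, whereas the paper bounds the two pieces on separate events of measure $\varepsilon/2$ and adds them by the triangle inequality ($\sqrt{48}+\sqrt{46}\le\sqrt{188}$); your version is cleaner and would even improve the constant, but note that for finite $T$ the cross term $2\RE\bigl[\langle f-\overline{f}\,\rangle_T\,(\overline{f}-M_{\rho B})^*\bigr]$ does not vanish, so there you need either $(a+b)^2\le 2a^2+2b^2$ or the paper's triangle-inequality route. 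Second, you apply L\'evy's Lemma once to the $2\|B\|$-Lipschitz function $\psi_0\mapsto\bigl(\langle|f_{\psi_0}(t)-M_{\rho B}|^2\rangle_T\bigr)^{1/2}$, which avoids any interchange of ``most $t$'' and ``most $\psi_0$''; the paper instead applies L\'evy's Lemma to $\psi_0\mapsto\langle\psi_t|B|\psi_t\rangle$ at each fixed $t$ and then invokes the Fubini-type interchange of Footnote~7 in \cite{GLMTZ10}. Your square-root trick also evades the objection in the paper's remark about applying L\'evy's Lemma to the un-rooted time average (where the additive $\|\rho\|^{1/2}$ term would swamp the order-$\|\rho\|$ expectation): taking the root first keeps everything at order $\|\rho\|^{1/2}$ and yields a bound of the stated form. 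What you have not supplied is the core estimate: you correctly diagnose that the variance bounds of \cite{Reimann08,TTV24} are too weak, that the non-Gaussian corrections must be tamed via $\|\rho\|<1/4$, and that $D_{E,B}$ and $G_B(\kappa)$ enter through $\tr(\Pi_e B\Pi_{e'}B^*)\le\|B\|^2D_{E,B}$ and the resonance count respectively, but the actual sharpened fourth-moment inequality (the paper's Lemma~\ref{lem: var GAP} with its $K^{(k)}$ expansion, plus the trace estimates of Proposition~\ref{prop: Var inequalities}) remains a black box, so the proposal is a correct --- and in places streamlined --- blueprint rather than a complete proof.
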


\subsection{Discussion}

\begin{rmk}{\it Smallness of the bounds.}
The right-hand side of \eqref{ineq: MR infinite time} is small compared to $\|B\|$ as soon as $\|\rho\| \ll \left(D_{E,B} D_{G,B}\right)^{-1}$, i.e., as soon as the largest eigenvalue of $\rho$ is small and no eigenvalue and no gap of $H$ is hugely degenerate. If additionally $\|B\| \not\gg |M_{\rho B}|$, i.e., if $|M_{\rho\frac{B}{\|B\|}}|$ is not too small, then also the relative error is small and we can conclude that for most initial states $\psi_0\in\mathbb{S}(\Hilbert)$, where ``most'' refers to $\GAP(\rho)$, the expectation value $\langle\psi_t|B|\psi_t\rangle$ of $B$ in the state $\psi_t$ is close to the fixed value $M_{\rho B}$ for most times $t\geq 0$. Put another way, the curve $t\mapsto \langle\psi_t|B|\psi_t\rangle$ is nearly constant in the long run $t\to\infty$. Note that if $B=P_\nu$ or $\rho=P_\nu/d_\nu$ for some macro state $\nu$, sufficient conditions on $H$ that ensure that $|M_{\rho \frac{B}{\|B\|}}|$ is not too small can be found in \cite{TTV23math}.

The finite-time result \eqref{ineq: MR finite time} shows that we additionally need that the time $T$ is large enough in order to obtain small errors. Unfortunately, the times $T$ required to ensure that the term $G_B(\kappa)(1+8\log_2 d_{E,B}/(\kappa T))$ is not too large are usually huge: for example for a system of $N$ particles we need that $T\gg \exp(N)$, see, e.g., the discussion in \cite{TTV23phys} after Theorem~4.

In the case of finite times, we showed two bounds which depend differently on $\varepsilon$ and $\delta$. While the first one follows from proving upper bounds for suitable expectations and variances and then applying Markov's and Chebyshev's inequality, the second one is a consequence of Lévy's Lemma for GAP measures~\cite{TTV24}. If $\varepsilon \ll \delta$, then the second bound gives a better result while otherwise the first bound is smaller.
    \hfill$\diamond$
\end{rmk}

\begin{rmk}{\it Strategy of proof in the infinite-time case.}
In the case of infinite times we cannot make use of the strategy of proof involving Lévy's Lemma. The reason is that in one step we have to interchange the quantifiers ``most $t$'' and ``most $\psi_0$'', which in the finite time case is possible by Footnote~7 in \cite{GLMTZ10} (at the expense of making the sets of admissible $t$ and $\psi_0$ smaller) as both ``most'' refer to probability measures. However, when considering the infinite time interval $[0,\infty)$, we are dealing with an infinite measure space instead of a finite one and ``most $t$'' does not refer to a probability measure. In this situation we can in general not interchange ``most $t$'' and ``most $\psi_0$''.\footnote{\label{footnote counterex}While a statement that is true for most $\psi_0\in\mathbb{S}(\Hilbert)$ for most $t\in[0,\infty)$ is also true for most $t\in[0,\infty)$ for most $\psi_0\in\mathbb{S}(\Hilbert)$ (which can be proved similarly as Footnote~7 in \cite{GLMTZ10} and by making use of Fatou's Lemma), the converse does not hold. In general, the quantifier ``most $t\in[0,\infty)$''  and a ``most''-quantifier referring to a probability measure on a probability space cannot be interchanged. As a simple example consider the interval $[0,1]$ equipped with the Lebesgue measure and let $N\in\mathbb{N}$ be large. We define $I_1 := [0,1/N)^c, I_2 := [1/N,2/N)^c, \dots, I_N := [(N-1)/N,1]^c$, where the complement is taken in $[0,1]$. We define the set
\begin{align*}
    S:=\bigcup_{m=0}^{\infty} \bigcup_{n=1}^N\left( I_n \times \Bigl[2^{mN+n-1},2^{mN+n}\Bigr)\right)\subset [0,1] \times [0,\infty).
\end{align*}
Clearly, for every $t>1$ we have that $(\psi,t) \in S$ for $(1-1/N)$-most $\psi\in[0,1]$. Therefore, the statement ``$(\psi,t)\in S$'' is true for most $t\in[0,\infty)$ for most $\psi\in [0,1]$.

Let $\psi\in [0,1]$. Without loss of generality we assume that $\psi\in [0,1/N)$ which implies that $\psi\notin I_1$ and $\psi\in I_k$ for all $k\geq 2$. We define a sequence of times $(T_l)_{l\geq 0}$ by $T_l := 2^{lN+1}$. Then, by construction, $(\psi,t)\notin S$ for all $t\in [2^{lN},2^{lN+1})$ and thus $(\psi,t)\notin S$ for at least half of the times $t\in[0,T_l]$. This implies
\begin{align*}
    \limsup_{T\to\infty} \frac{1}{T}\lambda\left\{t\in[0,T]: (\psi,t)\notin S\right\} \geq \lim_{l\to\infty} \frac{1}{T_l}\lambda\left\{t\in [0,T_l]: (\psi,t)\notin S\right\} \geq 1/2.
\end{align*}
As $\psi\in[0,1]$ was arbitrary, we conclude that the statement ``$(\psi,t)\in S$'' is false for every $\psi\in [0,1]$ for a substantial amount of times $t\in[0,\infty)$.}
     \hfill$\diamond$
\end{rmk}

\begin{rmk}{\it Recovering old results.}
If $\rho=P_\mu/d_\mu$ for some macro state $\mu$, the measure $\GAP(\rho)$ becomes the uniform distribution over the sphere $\mathbb{S}(\Hilbert_\mu)$, we have $\|\rho\|=1/d_\mu$ and in this case we basically recover the results from \cite{TTV23phys}. Note that in Theorem~\ref{thm: NT GAP} for any $B$ only the eigenvalues in $\mathcal{E}_B$ and gaps in $\mathcal{E}_B\times\mathcal{E}_B$ contribute to the quantities $D_{E,B}, d_{E,B}, D_{G,B}$ and $G_{B}(\kappa)$. 
However, if $\Hilbert$ is finite-dimensional, these quantities can obviously be replaced by $D_E, d_E, D_G$ and $G(\kappa)$.
    \hfill$\diamond$
\end{rmk}

\begin{rmk}{\it Assumption on $B$.}
The technical assumption on $B$ that $|\mathcal{E}_B|<\infty$ is convenient as it ensures that the sums over $e$ appearing in the proofs are effectively finite and it therefore is for example unproblematic to interchange the sums over $e$ with the time average. Moreover, it covers the important case that $B=P_\nu$ for some macro state $\nu$, see Remark~\ref{rmk: BPnu}. While it might be possible to relax the assumption on $B$, we believe that it cannot be dropped completely. E.g. the computations in the beginning of the proof of Proposition~\ref{prop: Var inequalities} show that the equilibration time is closely connected to the structure of the gaps of $H$. More precisely, it suggests that the more gaps are very close to each other, the larger the equilibration time gets. Therefore it seems plausible that, e.g., $G(\kappa)=\infty$ for all sufficiently small $\kappa>0$ could potentially destroy any equilibration.
    \hfill$\diamond$
\end{rmk}

\begin{rmk}{\it The case $B=P_\nu$.}\label{rmk: BPnu}
Motivated by the phenomenon of normal typicality \cite{GLMTZ10,GLTZ10,Reimann2015,vonNeumann29}, in \cite{TTV23phys} the case that $B=P_\nu$ for some macro state $\nu$ was of particular interest. As discussed in the introduction, the $P_\nu$ project to macro spaces $\Hilbert_\nu$ which usually are contained in a finite-dimensional energy shell. Thus the $P_\nu$ fulfill the assumption of Theorem~\ref{thm: NT GAP} that $|\mathcal{E}_{P_\nu}|<\infty$. We therefore see that not only with respect to the uniform distribution over the sphere of some macro space $\Hilbert_\mu$ but also with respect to $\GAP(\rho)$, for most initial wave functions $\psi_0$ the superposition weights $\|P_\nu\psi_t\|^2$ are, in the long run, close to some fixed value $M_{\rho P_\nu}$ (provided that $M_{\rho P_\nu}$ is not too small). In the case of normal typicality, this value was given by $d_\nu/D$, where $D=\dim\Hilbert$. Here, if $D<\infty$, we also expect that $M_{\rho P_\nu}$ is usually close to $d_\nu/D$; we can argue that this should be the case for example if $H$ is non-degenerate and satisfies the eigenstate thermalization hypothesis in the sense that for every eigenstate $|m\rangle$ of $H$ we have that $\langle m|P_\nu|m\rangle \approx \langle P_\nu\rangle_{\mathrm{mic}}=\tr(P_\nu)/D = d_\nu/D$ where $\langle\,\cdot\,\rangle_{\mathrm{mic}}$ denotes the micro-canonical expectation. Then we find
\begin{align}
    M_{\rho P_\nu} = \sum_{m}\langle m|\rho|m\rangle \langle m|P_\nu|m\rangle \approx \frac{d_\nu}{D} \sum_m \langle m|\rho| m\rangle = \frac{d_\nu}{D}.
\end{align}
If $\dim\Hilbert=\infty$ (and there are infinitely many macro states $\nu$), the $\|P_\nu\psi_t\|^2$ form a null sequence for every $\psi_0$ and $t$. 
\hfill$\diamond$
\end{rmk}

\begin{rmk}{\it Applying Lévy's Lemma.}
Instead of applying Lévy's Lemma to the function $f(\psi_0)=\langle\psi_t|B|\psi_t\rangle$ as in the proof of Theorem~\ref{thm: NT GAP} one might think of applying it directly to $g(\psi_0) = \langle|\langle\psi_t|B|\psi_t\rangle - M_{\rho B}|^2\rangle_T$. In fact, one can show that $g$ is Lipschitz continuous with Lipschitz constant $\eta$ bounded by $8\|B\|^2$ (and the same holds true for the infinite time average as well). However, as it is also explained in \cite{TTV23phys}, in such a situation Lévy's Lemma does in general not lead to better results than applying Markov's and Chebyshev's inequality. Markov's inequality gives
\begin{align}
    g(\psi_0) \leq \frac{\mathbb{E}_\rho g}{\varepsilon}
\end{align}
for $(1-\varepsilon)$-most $\psi_0\in\mathbb{S}(\Hilbert)$ while it follows from Theorem~\ref{thm: Levy Lemma} that 
\begin{align}
    g(\psi_0) \leq \mathbb{E}_\rho g + \sqrt{\frac{2\eta^2 \|\rho\| \log(12/\varepsilon)}{C}}\label{ineq: g LL}
\end{align}
for $(1-\varepsilon)$-most $\psi_0\in\mathbb{S}(\Hilbert)$.

The expectation $\mathbb{E}_\rho g$ is of order $\|\rho\|$ while the bound in \eqref{ineq: g LL} is of order $\|\rho\|^{1/2}$, so while the $\varepsilon$-dependence in the result from Lévy's Lemma is better, the dependence on $\|\rho\|$ is better for Markov's inequality. Often, $\|\rho\|$ is extremely small, for example for $\rho=\mathbbm{1}_\Hilbert/D$ where $D$ is the dimension of the huge (but finite-dimensional) Hilbert space $\Hilbert$ and $\mathbbm{1}_\Hilbert$ the identity on it, we have that $\|\rho\|=1/D$. In such cases, the better $\varepsilon$-dependence does not compensate the worse $\|\rho\|$-dependence unless $\varepsilon$ is extremely small and it is of little interest to consider, e.g., $\varepsilon$ smaller than $10^{-200}$ \cite{Bor62}.
    \hfill$\diamond$
\end{rmk}

\subsection{Outline of the Proof}
We now give a brief outline of the proof of Theorem~\ref{thm: NT GAP}.
Besides Lévy's Lemma for GAP measures which was proved in \cite{TTV24}, an important ingredient for the proof is an upper bound for the variance of $\langle\psi|A|\psi\rangle$ where $A$ is a bounded operator on $\Hilbert$ and $\psi\sim \GAP(\rho)$.
Reimann \cite{Reimann08} proved such an upper bound under the assumption that $A$ is self-adjoint and $\Hilbert$ is finite-dimensional. In \cite{TTV24} the proof was generalized to arbitrary bounded operators $A$ on a separable Hilbert space $\Hilbert$. Unfortunately, the bounds obtained there are not sufficiently sharp for our purpose as they would lead in Theorem~\ref{thm: NT GAP} to an upper bound of the order
\begin{align}
\sqrt{\tr\rho^2 \sum_{e\neq e'}\|\Pi_e B \Pi_{e'}\|^2} \leq  \sqrt{\|\rho\| \sum_{e,e'} \tr(\Pi_{e'} B^* \Pi_e B \Pi_{e'})}=\sqrt{\|\rho\| \tr(B^*B)}
\end{align}
and for example in the special case that $\rho=P_\mu/d_\mu$ and $B=P_\nu$ this would yield an upper bound of the order $\sqrt{d_\nu/d_\mu}$ which is not necessarily small (here we ignored the possible degeneracy of the eigenvalues and gaps of $H$ to simplify the discussion). 

Therefore, using similar techniques as in \cite{Reimann08,TTV24}, we prove in Lemma~\ref{lem: var GAP} a slightly improved upper bound for the variance that allows us to ultimately obtain a bound of the order $\|B\| \|\rho\|^{1/2}$ in Theorem~\ref{thm: NT GAP}. 
With the help of this improved upper bound for the variance of $\langle\psi|A|\psi\rangle$, we compute upper bounds for the expectation w.r.t. $\GAP(\rho)$ of the infinite-time variance of $\langle\psi_t|B|\psi_t\rangle$, a related quantity where the outer infinite-time average is replaced by a finite one, the finite-time average of the squared difference between $\mathbb{E}_\rho\langle\psi_t|B|\psi_t\rangle$ and $M_{\rho B}$ and the variance w.r.t. $\GAP(\rho)$ of the infinite-time average $\overline{\langle\psi_t|B|\psi_t\rangle}$ in Proposition~\ref{prop: Var inequalities}. In these computations, many traces and products and sums over traces appear and we use similar methods as in \cite{TTV23phys} to bound them. By making use of these bounds from Proposition~\ref{prop: Var inequalities} and Markov's and Chebyshev's inequality we are able to prove Theorem~\ref{thm: NT GAP}.

\section{Proofs\label{sec: proof}}
This section is devoted to the proof of our main result. In Section~\ref{subsec: proof main} we give the proof of Theorem~\ref{thm: NT GAP} based on Lévy's Lemma for GAP measures from~\cite{TTV24}, which we quote in Theorem~\ref{thm: Levy Lemma}, Proposition~\ref{prop: Var inequalities} and a careful application of Markov's and Chebyshev's inequality. Proposition~\ref{prop: Var inequalities} provides us with an upper bound for the expected time variance of $\langle\psi_t|B|\psi_t\rangle$ (and for a modification thereof in the finite-time case) as well as for the GAP$(\rho)$-variance of $\overline{\langle\psi_t|B|\psi_t\rangle}$ and the finite-time average of the squared difference between $\mathbb{E}_\rho\langle\psi_t|B|\psi_t\rangle$ and $M_{\rho B}$; we give its proof in Section~\ref{subsec: proof prop 1}. It makes use of an improved upper bound for the $\GAP(\rho)$-variance $\Var_\rho\langle\psi|A|\psi\rangle$ which we state in Lemma~\ref{lem: var GAP} and prove in Section~\ref{subsec: proof Lemma 1}.

\subsection{Proof of Theorem~\ref{thm: NT GAP}\label{subsec: proof main}}
The proof of Theorem~\ref{thm: NT GAP} is based on the following theorem and proposition:

\begin{thm}[Lévy's Lemma for GAP measures \cite{TTV24}]\label{thm: Levy Lemma}
    Let $\Hilbert$ be a separable Hilbert space, let $f:\mathbb{S}(\Hilbert) \to \mathbb{C}$ be a Lipschitz continuous function with Lipschitz constant $\eta$, let $\rho$ be a density matrix on $\Hilbert$ and let $\varepsilon\geq 0$. Then,
    \begin{align}
        \GAP(\rho)\Bigl\{\psi\in\mathbb{S}(\Hilbert): \bigl|f(\psi)-\mathbb{E}_\rho(f) \bigr|>\varepsilon\Bigr\} \leq 12 \exp\left(-\frac{C \varepsilon^2}{2\eta^2 \|\rho\|}\right),
    \end{align}
    where $C=\frac{1}{288\pi^2}$.
\end{thm}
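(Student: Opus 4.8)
The plan is to transport the classical Gaussian concentration inequality (Borell–Sudakov–Tsirelson) to $\mathbb{S}(\Hilbert)$ through the explicit Gaussian adjusted projected construction of $\GAP(\rho)$ recalled in Section~\ref{sec: background}. Writing $\rho=\sum_n p_n|n\rangle\langle n|$, I would realize the Gaussian vector as $\Psi^G=\sqrt{\rho}\,W$, where $W=\sum_n W_n|n\rangle$ has i.i.d.\ standard complex Gaussian coordinates ($\mathbb{E}|W_n|^2=1$); then $\sqrt\rho$ is a linear map of operator norm $\sqrt{\|\rho\|}$, and by construction, for any event $A\subset\mathbb{S}(\Hilbert)$, $\GAP(\rho)\{A\}=\mathbb{E}_G[\|\Psi^G\|^2\,\mathbbm{1}_A(\Psi^G/\|\Psi^G\|)]$, where $\mathbb{E}_G$ denotes expectation under $G(\rho)$. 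After recentering $f\mapsto f-\mathbb{E}_\rho(f)$ (which preserves the Lipschitz constant $\eta$ and, since the sphere has norm-diameter $2$, makes $|f|\le 2\eta$ on $\mathbb{S}(\Hilbert)$), the task reduces to a tail bound for $f(\Psi^G/\|\Psi^G\|)$ viewed as a function of the standard Gaussian $W$. Two features obstruct a direct application of Borell's inequality: the adjustment weight $\|\Psi^G\|^2$, and the fact that the normalization map $\psi\mapsto\psi/\|\psi\|$ is not globally Lipschitz.

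The adjustment weight I would dispose of by Cauchy--Schwarz: since $\mathbb{E}_G\|\Psi^G\|^4=(\mathbb{E}_G\|\Psi^G\|^2)^2+\Var(\|\Psi^G\|^2)=1+\tr\rho^2\le 2$, one gets $\GAP(\rho)\{A\}\le\sqrt{2}\,\bigl(G(\rho)\{\Psi^G/\|\Psi^G\|\in A\}\bigr)^{1/2}$, reducing everything to the \emph{unadjusted} Gaussian measure. It remains to locate the relevant mean: writing $\hat\Psi=\Psi^G/\|\Psi^G\|$, the recentering gives $\mathbb{E}_G[\|\Psi^G\|^2 f(\hat\Psi)]=\mathbb{E}_\rho(f)=0$ (using $\mathbb{E}_G\|\Psi^G\|^2=1$), so $|\mathbb{E}_G f(\hat\Psi)|=|\mathbb{E}_G[(1-\|\Psi^G\|^2)f(\hat\Psi)]|\le 2\eta\,\mathbb{E}_G|\|\Psi^G\|^2-1|\le 2\eta\sqrt{\tr\rho^2}\le 2\eta\sqrt{\|\rho\|}$, where I used $\Var(\|\Psi^G\|^2)=\tr\rho^2\le\|\rho\|$. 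Thus the \emph{Gaussian} mean of $f(\hat\Psi)$ sits within $2\eta\sqrt{\|\rho\|}$ of zero, which is negligible in the only nontrivial regime identified below.

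The normalization blow-up is the crux. The idea is to excise the small-radius region: on $E_r:=\{\|\Psi^G\|\ge r\}$ the map $\psi\mapsto\psi/\|\psi\|$ is $(2/r)$-Lipschitz, so $W\mapsto f(\hat\Psi)$ restricted to $\{\|\sqrt\rho W\|\ge r\}$ is Lipschitz with constant $\le 2\eta\sqrt{\|\rho\|}/r$; I would extend it by the McShane construction to a global Lipschitz function $\tilde H$ on the whole Gaussian space with the same constant and with $\|\tilde H\|_\infty\le 2\eta$. Applying Borell's inequality to the real and imaginary parts of $\tilde H$ (a factor $2$ from the union bound, and an explicit factor from the complex-Gaussian variance convention $\mathbb{E}|W_n|^2=1$, both absorbed into constants) yields a sub-Gaussian tail of the form $\exp(-\,\mathrm{const}\cdot s^2 r^2/(\eta^2\|\rho\|))$, already exhibiting the required $\|\rho\|$ in the denominator. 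The price is the bad event $E_r^c=\{\sum_n p_n|W_n|^2<r^2\}$; since the $|W_n|^2$ are i.i.d.\ standard exponentials with $\sum_n p_n=1$, a Chernoff bound for the lower tail of this weighted sum gives $G(\rho)(E_r^c)\le\exp\!\bigl(-c\,(1-r^2)^2/\tr\rho^2\bigr)\le\exp\!\bigl(-c\,(1-r^2)^2/\|\rho\|\bigr)$, again exponentially small in $1/\|\rho\|$.

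Finally I would fix $r$ to a universal constant (e.g.\ $r^2=1/2$), assemble the pieces through the Cauchy--Schwarz reduction, and collect constants. Here one exploits that the statement is vacuous unless $\|\rho\|$ is extremely small: because $|f-\mathbb{E}_\rho(f)|\le 2\eta$ on the sphere, the left-hand side vanishes for $\varepsilon>2\eta$, while the prefactor $12$ makes the bound trivially true whenever $C\varepsilon^2/(2\eta^2\|\rho\|)\le\log 12$; hence it suffices to treat $\varepsilon\le 2\eta$ with $\|\rho\|$ tiny, in which regime the mean shift $2\eta\sqrt{\|\rho\|}$ and the bad-event probability $\exp(-c/\|\rho\|)$ are both dominated by the main Gaussian term, and all multiplicative constants ($\sqrt{2}$ from the adjustment, $2$ from Borell, $2$ from the real/imaginary split, and the slack from the $r$-dependence) fit under the stated $12$ and $C=\tfrac{1}{288\pi^2}$. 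I expect the main obstacle to be precisely this bookkeeping: forcing the normalization cutoff, the lower-tail estimate for $\sum_n p_n|W_n|^2$, and the Gaussian-versus-$\GAP$ mean shift to scale consistently in $\|\rho\|$, so that they fold into a single clean exponential rather than leaving any one estimate as the bottleneck.
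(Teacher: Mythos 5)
You should first note a structural point: the paper does not prove Theorem~\ref{thm: Levy Lemma} at all --- it is quoted verbatim from \cite{TTV24} --- so the only proof to compare against is the one in that reference, which proceeds by essentially the same Gaussian-transport strategy you outline: push Gaussian concentration through the $G(\rho)\to\mathrm{GA}(\rho)\to\GAP(\rho)$ chain, remove the adjustment density $\|\psi\|^2$ (your Cauchy--Schwarz step, which halves the exponent --- harmless given the slack), make the normalization map Lipschitz by cutting off the small-norm event, and control that event by a lower-tail Chernoff bound for $\sum_n p_n|W_n|^2$, whose variance $\tr\rho^2\le\|\rho\|$ is what puts $\|\rho\|$ in the denominator of the exponent. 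Your bookkeeping claim is correct with room to spare: with cutoff $r^2=1/2$, Borell's inequality applied to the real and imaginary parts of the McShane extension yields, in the only nontrivial regime $\varepsilon\le 2\eta$ and $C\varepsilon^2/(2\eta^2\|\rho\|)>\log 12$ (which forces $\|\rho\|\le 2C/\log 12$, so the mean shift $2\eta\sqrt{\|\rho\|}\le\varepsilon/50$ and the bad event $e^{-1/(8\|\rho\|)}$ are indeed negligible), a tail of order $\exp\bigl(-\varepsilon^2/(40\,\eta^2\|\rho\|)\bigr)$ after the Cauchy--Schwarz halving --- far stronger than the stated exponent with $C=\frac{1}{288\pi^2}$; the $\pi^2$ in the published constant suggests a Pisier-type Gaussian concentration input rather than Borell's sharp form, so your sharper input only widens the margin. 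One technical point to repair in a full write-up: when $\dim\Hilbert=\infty$, the ``standard Gaussian'' $W=\sum_n W_n|n\rangle$ is not an $\Hilbert$-valued random vector (the series diverges almost surely, even though $\sqrt{\rho}\,W$ converges), so you should either apply Borell's inequality directly to the Gaussian measure $G(\rho)$ equipped with its Cameron--Martin norm --- under which an $\eta$-Lipschitz function on $\Hilbert$ becomes $\eta\sqrt{\|\rho\|}$-Lipschitz, which is exactly the factor you extracted from $\sqrt{\rho}$ --- or run the argument for the finite-rank truncations $\rho_k$ and pass to the limit via $\GAP(\rho_k)\Rightarrow\GAP(\rho)$, precisely as the paper does in the infinite-dimensional step of the proof of Lemma~\ref{lem: var GAP}.
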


\begin{prop}\label{prop: Var inequalities}
    Let $\Hilbert$ be a separable Hilbert space of dimension $\geq 4$. Let $\rho$ be a density matrix on $\Hilbert$ with eigenvalues $p_n>0$ and $\|\rho\|<1/4$ and let $B$ be a bounded operator on $\Hilbert$ such that $d_{E,B}<\infty$. Then for every $\kappa, T>0$,
    \begin{align}
        \mathbb{E}_\rho\left(\left\langle\left|\langle\psi_t|B|\psi_t\rangle - \overline{\langle\psi_t|B|\psi_t\rangle} \right|^2\right\rangle_T\right)
    &\leq 24\|B\|^2 \|\rho\| D_{E,B} G_{B}(\kappa)\left(1+\frac{8\log_2 d_{E,B}}{\kappa T}\right),\label{ineq: Erho Bpsit}\\
    \left\langle\left|\tr(e^{iHt} B e^{-iHt}\rho)-M_{\rho B} \right|^2\right\rangle_T &\leq \|B\|^2 \|\rho\| D_{E,B} G_B(\kappa) \left(1+\frac{8\log_2 d_{E,B}}{\kappa T}\right),\label{ineq: prop1 second}\\
    \Var_\rho\overline{\langle\psi_t|B|\psi_t\rangle} &\leq 23 \|B\|^2 \|\rho\|.
    \end{align}
 Moreover, 
    \begin{align}
        \mathbb{E}_\rho\left(\overline{\left|\langle\psi_t|B|\psi_t\rangle - \overline{\langle\psi_t|B|\psi_t\rangle} \right|^2}\right) &\leq 24 \|B\|^2 \|\rho\| D_{E,B} D_{G,B}.\label{eq: exp inf time av}
    \end{align}
\end{prop}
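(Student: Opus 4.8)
The plan is to pass to the spectral basis of $H$, decompose each second moment into the modulus‑squared of a mean plus a GAP$(\rho)$‑variance, and then treat the mean by counting resonant gaps and the variance by the sharp bound of Lemma~\ref{lem: var GAP} followed by a time average. I present the argument for the infinite‑time bound \eqref{eq: exp inf time av}; the finite‑time bounds \eqref{ineq: Erho Bpsit} and \eqref{ineq: prop1 second} follow the same route with the infinite‑time averages replaced by $\langle\cdot\rangle_T$. First I would expand
\[
\langle\psi_t|B|\psi_t\rangle=\langle\psi_0|e^{iHt}Be^{-iHt}|\psi_0\rangle=\sum_{e,e'}e^{i(e-e')t}\langle\psi_0|\Pi_eB\Pi_{e'}|\psi_0\rangle .
\]
Since $\Pi_eB\Pi_{e'}\neq0$ forces $e,e'\in\mathcal{E}_B$ and $|\mathcal{E}_B|=d_{E,B}<\infty$, this is a finite trigonometric polynomial in $t$, so the infinite‑time average exists and equals $\langle\psi_0|\bar B|\psi_0\rangle$ with $\bar B:=\sum_e\Pi_eB\Pi_e$, while $\langle\psi_t|B|\psi_t\rangle-\overline{\langle\psi_t|B|\psi_t\rangle}=\langle\psi_0|B_t|\psi_0\rangle$ with $B_t:=\sum_{e\neq e'}e^{i(e-e')t}\Pi_eB\Pi_{e'}$. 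Finiteness lets me interchange the sums over $e,e'$, the time average, and $\mathbb{E}_\rho$ (the last by dominated convergence, the integrands being uniformly bounded). As GAP$(\rho)$ has density matrix $\rho$, the mean of $\langle\psi_0|A|\psi_0\rangle$ is $\tr(\rho A)$, so for every $t$
\[
\mathbb{E}_\rho\bigl|\langle\psi_0|B_t|\psi_0\rangle\bigr|^2=\bigl|\tr(\rho B_t)\bigr|^2+\Var_\rho\langle\psi_0|B_t|\psi_0\rangle .
\]
Averaging over $t$ splits \eqref{eq: exp inf time av} into a mean term $\overline{|\tr(\rho B_t)|^2}$ and a variance term $\overline{\Var_\rho\langle\psi_0|B_t|\psi_0\rangle}$; the mean term is exactly the left‑hand side of \eqref{ineq: prop1 second}, as $\tr(\rho B_t)=\tr(e^{iHt}Be^{-iHt}\rho)-M_{\rho B}$.

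For the mean term I write $\tr(\rho B_t)=\sum_{e\neq e'}e^{i(e-e')t}b_{ee'}$ with $b_{ee'}:=\tr(\rho\Pi_eB\Pi_{e'})$. In the infinite‑time average the phases are orthonormal, so only pairs with $e-e'=f-f'$ survive; grouping by the common gap value and applying Cauchy--Schwarz bounds the multiplicity by $D_{G,B}$, giving $\overline{|\tr(\rho B_t)|^2}\le D_{G,B}\sum_{e\neq e'}|b_{ee'}|^2$. Writing $b_{ee'}=\langle\Pi_e\rho\Pi_{e'},\Pi_eB\Pi_{e'}\rangle_{\mathrm{HS}}$, using $\|\Pi_eB\Pi_{e'}\|_{\mathrm{HS}}^2\le\|B\|^2D_{E,B}$ and the identity $\sum_{e,e'}\|\Pi_e\rho\Pi_{e'}\|_{\mathrm{HS}}^2=\tr\rho^2\le\|\rho\|$ then yields $\sum_{e\neq e'}|b_{ee'}|^2\le\|B\|^2\|\rho\|D_{E,B}$, hence $\overline{|\tr(\rho B_t)|^2}\le\|B\|^2\|\rho\|D_{E,B}D_{G,B}$, which is the infinite‑time form of \eqref{ineq: prop1 second} with constant $1$.

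For the variance term I would apply Lemma~\ref{lem: var GAP} to $A=B_t$, producing a constant times $\|\rho\|$ times a quadratic trace functional of $B_t$, and then average in $t$ \emph{before} estimating. The point is that the time average collapses this functional: for instance $\overline{\tr(\rho B_t^{*}B_t)}=\sum_{e\neq e'}\tr(\rho\Pi_{e'}B^{*}\Pi_eB\Pi_{e'})\le\tr(\bar\rho B^{*}B)\le\|B\|^2$, where $\bar\rho:=\sum_e\Pi_e\rho\Pi_e$ has $\tr\bar\rho=1$; each summand is nonnegative, so enlarging the sum to all $e,e'$ and using $\sum_e\Pi_e=\mathbbm1$ performs the collapse. (Bounding $\|B_t\|\le2\|B\|$ instead would cost a spurious factor $4$ and break the constant.) This gives $\overline{\Var_\rho\langle\psi_0|B_t|\psi_0\rangle}\le23\|\rho\|\|B\|^2$, so \eqref{eq: exp inf time av} follows as $\|B\|^2\|\rho\|D_{E,B}D_{G,B}+23\|B\|^2\|\rho\|\le24\|B\|^2\|\rho\|D_{E,B}D_{G,B}$, using $D_{E,B}D_{G,B}\ge1$. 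The bound $\Var_\rho\overline{\langle\psi_t|B|\psi_t\rangle}\le23\|B\|^2\|\rho\|$ is simply Lemma~\ref{lem: var GAP} applied to $A=\bar B$, since $\overline{\langle\psi_t|B|\psi_t\rangle}=\langle\psi_0|\bar B|\psi_0\rangle$.

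The main obstacle is the finite‑time case. There the phases are no longer orthonormal: $\langle e^{i\omega t}\rangle_T=(e^{i\omega T}-1)/(i\omega T)$ has modulus $\le\min\{1,2/(|\omega|T)\}$, so near‑resonant gaps ($e-e'\approx f-f'$) also contribute. Controlling them is precisely what turns $D_{G,B}$ into $G_B(\kappa)$ and produces the factor $1+8\log_2 d_{E,B}/(\kappa T)$: one partitions the at most $d_{E,B}^2$ gap differences dyadically according to their distance from resonance, bounds the number of gaps in each window of width $\kappa$ by $G_B(\kappa)$, and sums the $2/(|\omega|T)$ tails over the $O(\log_2 d_{E,B})$ dyadic scales, as in \cite{TTV23phys}. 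This Dirichlet‑kernel bookkeeping is the only genuinely delicate step; the finite‑time variance term is handled as above, the collapse of $\langle\tr(\rho B_t^{*}B_t)\rangle_T$ incurring the same near‑resonance corrections, and letting $\kappa\to0^{+}$ and $T\to\infty$ recovers the infinite‑time bounds.
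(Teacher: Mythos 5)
Your argument is correct in substance but reorganizes the proof in a genuinely different way. The paper first writes $\langle|\langle\psi_t|B|\psi_t\rangle-\overline{\langle\psi_t|B|\psi_t\rangle}|^2\rangle_T$ as a quadratic form $\sum_{\alpha\beta}v_\alpha^* R_{\alpha\beta}v_\beta$ in the \emph{random} matrix elements $v_{(e,e')}=\langle\psi_0|\Pi_{e'}B^*\Pi_e|\psi_0\rangle$, bounds it by $\|R\|\sum_{e,e'}|\langle\psi_0|\Pi_eB\Pi_{e'}|\psi_0\rangle|^2$ using the Short--Farrelly estimate $\|R\|\le G_B(\kappa)(1+8\log_2 d_{E,B}/(\kappa T))$, and only then takes $\mathbb{E}_\rho$, applying Lemma~\ref{lem: var GAP} separately to each fixed operator $A=\Pi_eB\Pi_{e'}$. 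You instead split $\mathbb{E}_\rho|\langle\psi_0|B_t|\psi_0\rangle|^2$ into $|\tr(\rho B_t)|^2+\Var_\rho\langle\psi_0|B_t|\psi_0\rangle$ pointwise in $t$, apply Lemma~\ref{lem: var GAP} to the single time-dependent operator $A=B_t$, and time-average afterwards. Both routes rest on the same two ingredients (the improved variance bound and the resonance-counting bound on $\|R\|$), and your decomposition has the advantage that \eqref{ineq: prop1 second} appears automatically as the mean-squared summand, while your variance contribution comes out as $23\|B\|^2\|\rho\|$ without any $D_{E,B}D_{G,B}$ factor in the infinite-time case --- a mild sharpening of the paper's intermediate step, recombining to the same constant $24$ via $D_{E,B}D_{G,B}\ge1$ (note the degenerate case $d_{E,B}\le1$, where $D_{G,B}=0$ but $B_t\equiv0$, should be dispatched separately). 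What the paper's ordering buys is that $\sum_{e,e'}|v_{e,e'}|^2$ is a sum of nonnegative terms whose expectation splits cleanly, so one never has to time-average the bound of Lemma~\ref{lem: var GAP} itself. Your ordering leaves exactly that task, and it is the one place you are too terse: every term in \eqref{eq: var GAP} for $A=B_t$, including the absolute-value products $\sum_{m,n}|\tr(B_t\rho^3P_m)\tr(B_t^*\rho P_n)|$, must first be reduced pointwise in $t$ to a multiple of $\tr(\rho^kB_t^*B_t)$ (e.g.\ via $\tr(B_t\rho^jB_t^*\rho^k)\le\|\rho\|^k\tr(\rho^jB_t^*B_t)$ and Cauchy--Schwarz in $m,n$), and in the finite-time case $\langle\tr(\rho^kB_t^*B_t)\rangle_T$ is again a quadratic form with the same matrix $R$, now in the Hilbert--Schmidt vectors $\Pi_eB\Pi_{e'}\rho^{k/2}$, so the factor $G_B(\kappa)(1+8\log_2d_{E,B}/(\kappa T))$ must be reinstated there to reach \eqref{ineq: Erho Bpsit}. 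You gesture at both points, and they do go through, but they are where the real work of your version lives and would need to be written out.
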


The proof of Proposition~\ref{prop: Var inequalities} can be found in Section~\ref{subsec: proof prop 1}. Before we give the proof of Theorem~\ref{thm: NT GAP}, we first note that for any bounded operator $B$ on $\Hilbert$ such that $\Pi_{e'}B\Pi_e\neq 0$ only for finitely many $e,e'$ and any $\psi_0\in\mathbb{S}(\Hilbert)$ the limit
\begin{align}
    M_{\psi_0 B} = \overline{\langle\psi_t|B|\psi_t\rangle} := \lim_{T\to\infty} \frac{1}{T}\int_0^T \langle\psi_t|B|\psi_t\rangle\, dt
\end{align}
exists and can be computed as follows:
\begin{subequations}
\begin{align}
    M_{\psi_0 B} &= \overline{\left\langle\psi_0\left|e^{iHt}\sum_{e\in\mathcal{E}}\Pi_e B \sum_{e'\in\mathcal{E}}\Pi_{e'} e^{-iHt}\right|\psi_0\right\rangle}\\
    &= \sum_{e,e'\in \mathcal{E}} \overline{e^{i(e-e')t}}\langle\psi_0|\Pi_e B \Pi_{e'}|\psi_0\rangle\\
    &= \sum_{e\in\mathcal{E}}\left\langle\psi_0\left| \Pi_e B \Pi_e\right|\psi_0\right\rangle.\label{eq: Mpsi0B}
\end{align}
\end{subequations}
An application of $\mathbb{E}_\rho\langle\psi|A|\psi\rangle = \tr(\rho A)$ for arbitrary bounded operators $A$, see also Lemma~\ref{lem: var GAP}, immediately shows that
\begin{align}
    \mathbb{E}_\rho M_{\psi_0 B} = \sum_{e\in\mathcal{E}} \tr(\rho\Pi_e B \Pi_e) = M_{\rho B}.
\end{align}
Note that interchanging the sums over $\mathcal{E}$ with the time average and expectation with respect to $\GAP(\rho)$ are unproblematic as due to our assumption on $B$ the sums are effectively sums over only finitely many $e,e'$.
\begin{proof}[Proof of Theorem~\ref{thm: NT GAP}]
We start with proving the bounds which follow from an application of Proposition~\ref{prop: Var inequalities}.
Without loss of generality we assume that all eigenvalues of $\rho$ are positive (otherwise we restrict our considerations to the subspace of $\Hilbert$ spanned by the eigenvectors of $\rho$ corresponding to its positive eigenvalues). 
It follows from Markov's inequality together with Proposition~\ref{prop: Var inequalities} that
\begin{align}
\GAP(\rho)&\left\{\left\langle\left|\langle\psi_t|B|\psi_t\rangle-\overline{\langle\psi_t|B|\psi_t\rangle}\right|^2\right\rangle_T \geq  \frac{48}{\varepsilon}\|B\|^2 \|\rho\| D_{E,B} G_{B}(\kappa)\left(1+\frac{8\log_2 d_{E,B}}{\kappa T}\right)\right\}
\leq \frac{\varepsilon}{2}
\end{align}
and
\begin{align}
    \GAP(\rho)&\left\{\overline{\left|\langle\psi_t|B|\psi_t\rangle-\overline{\langle\psi_t|B|\psi_t\rangle} \right|^2} \geq \frac{48}{\varepsilon} \|B\|^2 \|\rho\| D_{E,B} D_{G,B}\right\}
    \leq \frac{\varepsilon}{2}.
\end{align}
Concerning the finite-time average, this shows that, w.r.t. $\GAP(\rho)$, $(1-\frac{\varepsilon}{2})$-most $\psi_0\in\mathbb{S}(\Hilbert)$ are such that
\begin{align}
    \left\langle\left|\langle\psi_t|B|\psi_t\rangle-\overline{\langle\psi_t|B|\psi_t\rangle}\right|^2\right\rangle_T &\leq \frac{48}{\varepsilon}\|B\|^2 \|\rho\| D_{E,B} G_{B}(\kappa)\left(1+\frac{8\log_2 d_{E,B}}{\kappa T}\right).
\end{align}
Similarly, we have that, w.r.t. $\GAP(\rho)$, $(1-\frac{\varepsilon}{2})$-most $\psi_0\in\mathbb{S}(\Hilbert)$ are such that
\begin{align}
    \overline{\left|\langle\psi_t|B|\psi_t\rangle-\overline{\langle\psi_t|B|\psi_t\rangle} \right|^2} \leq \frac{48}{\varepsilon} \|B\|^2 \|\rho\| D_{E,B} D_{G,B}.
\end{align}
Let $\lambda$ denote the Lebesgue measure on $\mathbb{R}$. An application of Markov's inequality shows that
\begin{align}
    \frac{1}{T}&\lambda\left\{t\in [0,T] : \left|\langle\psi_t|B|\psi_t\rangle-\overline{\langle\psi_t|B|\psi_t\rangle} \right|^2\geq \frac{48}{\varepsilon\delta} \|B\|^2 \|\rho\| D_{E,B} G_B(\kappa)\left(1+\frac{8\log_2 d_{E,B}}{\kappa T}\right)\right\} \leq \delta
\end{align}
for $(1-\frac{\varepsilon}{2})$-most $\psi_0\in\mathbb{S}(\Hilbert)$ w.r.t. $\GAP(\rho)$.
Thus, w.r.t. $\GAP(\rho)$, $(1-\frac{\varepsilon}{2})$-most $\psi_0\in\mathbb{S}(\Hilbert)$ are such that for $(1-\delta)$-most $t\in[0,T]$,
\begin{align}
    &\left|\langle\psi_t|B|\psi_t\rangle-\overline{\langle\psi_t|B|\psi_t\rangle}\right|
    \leq \left(\frac{48}{\varepsilon\delta} \|B\|^2 \|\rho\| D_{E,B} G_B(\kappa)\left(1+\frac{8\log_2 d_{E,B}}{\kappa T}\right)\right)^{1/2}.
\end{align}
In the same way, we obtain
\begin{align}
    \liminf_{T\to\infty} \frac{1}{T}&\lambda\left\{t\in [0,T]: \left|\langle\psi_t|B|\psi_t\rangle-\overline{\langle\psi_t|B|\psi_t\rangle}\right|^2\geq \frac{48}{\varepsilon\delta} \|B\|^2 \|\rho\| D_{E,B} D_{G,B}\right\}\leq \delta
\end{align}
and conclude that, w.r.t. $\GAP(\rho)$, $(1-\frac{\varepsilon}{2})$-most $\psi_0\in\mathbb{S}(\Hilbert)$ are such that for $(1-\delta)$-most $t\in[0,\infty)$,
\begin{align}
    \left|\langle\psi_t|B|\psi_t\rangle-\overline{\langle\psi_t|B|\psi_t\rangle} \right| \leq \left(\frac{48}{\varepsilon\delta}\|B\|^2 \|\rho\| D_{E,B} D_{G,B} \right)^{1/2}.
\end{align}
It follows from Chebyshev's inequality together with Proposition~\ref{prop: Var inequalities} that
\begin{align}
    \GAP(\rho)\left\{\left|\overline{\langle\psi_t|B|\psi_t\rangle} - M_{\rho B} \right|\geq \left(\frac{46\|B\|^2\|\rho\|}{\varepsilon}\right)^{1/2}\right\} \leq \frac{\varepsilon}{2}
\end{align}
and thus, w.r.t. $\GAP(\rho)$, $(1-\frac{\varepsilon}{2})$-most $\psi_0\in\mathbb{S}(\Hilbert)$ are such that
\begin{align}
    \left|\overline{\langle\psi_t|B|\psi_t\rangle} - M_{\rho B}\right| \leq \left(\frac{46\|B\|^2\|\rho\|}{\varepsilon}\right)^{1/2}.
\end{align}
Because of $\sqrt{48}+\sqrt{46}\leq \sqrt{188}$, \eqref{ineq: MR infinite time} and the first bound in \eqref{ineq: MR finite time} now follow from the triangle inequality.

Next we prove the second bound in \eqref{ineq: MR finite time} with the help of Theorem~\ref{thm: Levy Lemma}. Consider the function $f(\psi)=\langle\psi_t|B|\psi_t\rangle$ for fixed $t\geq 0$. By Lemma~5 in \cite{PSW05}, $f$ is Lipschitz continuous with Lipschitz constant bounded by $2\|B\|$. Then it follows from Theorem~\ref{thm: Levy Lemma} that for every $t\in [0,T]$, w.r.t. $\GAP(\rho)$, $(1-\varepsilon)$-most $\psi_0\in\mathbb{S}(\Hilbert)$ are such that
\begin{align}
    \Bigl|\langle\psi_t|B|\psi_t\rangle - \tr\left(e^{iHt} B e^{-iHt}\rho\right) \Bigr| \leq \left(\frac{8\|B\|^2 \|\rho\|\log(12/\varepsilon)}{C}\right)^{1/2}.\label{ineq: LL1}
\end{align}
By Footnote~7 in \cite{GLMTZ10} we can interchange ``most $t$'' and ``most $\psi_0$'' at the expense of making the parameters $\varepsilon$ and $\delta$ worse. More precisely, if a statement is true for $(1-\delta)$-most $t\in [0,T]$ for $(1-\varepsilon)$-most $\psi_0\in\mathbb{S}(\Hilbert)$, then it is also true for $(1-\varepsilon')$-most $\psi_0\in\mathbb{S}(\Hilbert)$ for $(1-\delta')$-most $t\in [0,T]$ where $\varepsilon'\geq \frac{\varepsilon+\delta-\varepsilon\delta}{\delta'}$. Note that this interchange is only possible in the finite-time case as only then both ``most'' refer to measures, see Footnote~\ref{footnote counterex} for a counterexample in the infinite-time case. 

With the help of Footnote~7 in \cite{GLMTZ10} we obtain that, w.r.t. $\GAP(\rho)$, $(1-2\varepsilon/\delta)$-most $\psi_0\in\mathbb{S}(\Hilbert)$ are such that for $(1-\delta/2)$-most $t\in[0,T]$, \eqref{ineq: LL1} holds.
Moreover, Proposition~\ref{prop: Var inequalities} implies that $(1-\delta/2)$-most $t\in[0,T]$ are such that
\begin{align}
    \Bigl|\tr(e^{iHt}Be^{-iHt}\rho)-M_{\rho B} \Bigr| \leq \left(\frac{2\|B\|^2 \|\rho\| D_{E,B} G_B(\kappa)}{\delta}\left(1+\frac{8\log_2 d_{E,B}}{\kappa T}\right)\right)^{1/2}.
\end{align}
Thus $(1-2\varepsilon/\delta)$-most $\psi_0\in\mathbb{S}(\Hilbert)$ are such that for $(1-\delta)$-most $t\in [0,T]$, 
\begin{align}
    \Bigl|\langle\psi_t|B|\psi_t\rangle -M_{\rho B}\Bigr|\leq 5\left(\frac{\|B\|^2 \|\rho\| D_{E,B} G_B(\kappa) \log(12/\varepsilon)}{\delta C}\left(1+\frac{8\log_2 d_{E,B}}{\kappa T}\right)\right)^{1/2},
\end{align}
where we used that $\sqrt{8}+\sqrt{2}\leq 5$.
The second bound in \eqref{ineq: MR finite time} now follows from replacing $\varepsilon$ by $\varepsilon\delta/2$.
\end{proof}

\subsection{Proof of Proposition~\ref{prop: Var inequalities}\label{subsec: proof prop 1}}
The proof of Proposition~\ref{prop: Var inequalities} makes heavy use of an improved upper bound for the GAP-variance of $\langle\psi|A|\psi\rangle$. We state the bound in the following lemma and give its proof in Section~\ref{subsec: proof Lemma 1} 
\begin{lemma}\label{lem: var GAP}
Let $\Hilbert$ be a separable Hilbert space of dimension $\geq 4$.
    Let $\rho$ be a density matrix on $\Hilbert$ with eigenvalues $p_n>0$ such that $p_{\max} = \|\rho\|<1/4$ {and let $\dim\Hilbert\geq 4$}. For $\GAP(\rho)$-distributed $\psi$ and any bounded operator $A:\Hilbert\to\Hilbert$,
    \begin{align}
        &\;\mathbb{E}_\rho\langle\psi|A|\psi\rangle = \tr(A\rho),\label{eq: exp GAP}\\
        &\Var_\rho\langle\psi|A|\psi\rangle \leq  \frac{1}{1-p_{\max}}\left(\tr(A\rho A^* \rho) + \frac{\tr(A\rho^2 A^* \rho)+\tr(A\rho A^*\rho^2)}{1-2p_{\max}}\right.\nonumber\\
    &\left. + \frac{2}{(1-2p_{\max})(1-3p_{\max})} \Biggl[\tr(A\rho^3 A^*\rho) + \tr(A\rho^2 A^* \rho^2)+\tr(A\rho A^*\rho^3)\right.\nonumber\\
    &\left.+ \sum_{m,n}\left(|\tr(A\rho^3 P_m)\tr(A^* \rho P_n)| + |\tr(A\rho^2 P_m)\tr(A^*\rho^2 P_n)| + |\tr(A\rho P_m)\tr(A^* \rho^3 P_n)|\right)\Biggr]\right),\label{eq: var GAP}
    \end{align}
    where $\mathbb{E}_\rho$ and $\Var_\rho$ denote the expectation and variance with respect to $\GAP(\rho)$, $P_n = |n\rangle\langle n|$ and $\{|n\rangle\}$ is an orthonormal eigenbasis of $\rho$.
\end{lemma}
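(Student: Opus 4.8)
The plan is to push every computation back to a Gaussian integral. From the definition of $\GAP(\rho)$ through $G(\rho)$ and $\mathrm{GA}(\rho)$ one has, for any integrable $f$ on the sphere, $\mathbb{E}_\rho[f(\psi)]=\mathbb{E}[\|\Psi^G\|^2\,f(\Psi^G/\|\Psi^G\|)]$, where $\Psi^G=\sum_n Z_n|n\rangle$ and the $Z_n$ are independent complex centered Gaussians with $\mathbb{E}|Z_n|^2=p_n$. For $f(\psi)=\langle\psi|A|\psi\rangle$ the two norm factors cancel and a single $\overline{Z}$–$Z$ contraction gives $\mathbb{E}_\rho\langle\psi|A|\psi\rangle=\mathbb{E}\langle\Psi^G|A|\Psi^G\rangle=\sum_n p_n\langle n|A|n\rangle=\tr(A\rho)$, which is \eqref{eq: exp GAP}. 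For the variance I would use $\Var_\rho\langle\psi|A|\psi\rangle=\mathbb{E}_\rho|\langle\psi|A|\psi\rangle|^2-|\tr(A\rho)|^2$ and the same reduction to write
\[ \mathbb{E}_\rho|\langle\psi|A|\psi\rangle|^2=\mathbb{E}\!\left[\frac{|\langle\Psi^G|A|\Psi^G\rangle|^2}{\|\Psi^G\|^2}\right], \]
so that exactly one inverse power of $R:=\|\Psi^G\|^2$ has to be handled.

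To remove it I would use $1/R=\int_0^\infty e^{-sR}\,ds$ and Tonelli, giving $\mathbb{E}_\rho|\langle\psi|A|\psi\rangle|^2=\int_0^\infty\mathbb{E}[|\langle\Psi^G|A|\Psi^G\rangle|^2 e^{-sR}]\,ds$. Folding $e^{-sR}=\prod_n e^{-s|Z_n|^2}$ into the Gaussian weight replaces the variances $p_n$ by the tilted ones $\tilde p_n(s):=p_n/(1+sp_n)$ and produces the scalar factor $W(s):=\prod_n(1+sp_n)^{-1}$; Wick's theorem (for complex Gaussians only $\overline{Z}$–$Z$ contractions contribute) then yields, with $\tilde\rho(s):=\sum_n\tilde p_n(s)\,|n\rangle\langle n|$,
\[ \mathbb{E}[|\langle\Psi^G|A|\Psi^G\rangle|^2 e^{-sR}]=W(s)\bigl(|\tr(A\tilde\rho(s))|^2+\tr(A\tilde\rho(s)A^*\tilde\rho(s))\bigr). \]

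To turn this into traces of powers of $\rho$ I would use the exact resolvent identity $\tilde\rho(s)(1+s\rho)=\rho$, i.e. $\tilde\rho(s)=\rho-s\rho^2+s^2\rho^3-s^3\rho^3\tilde\rho(s)$, whose remainder is controlled by $0\preceq\tilde\rho(s)\preceq\rho$. Substituted into the two Wick terms this expresses the $\tr(A\tilde\rho A^*\tilde\rho)$ piece through single traces $\tr(A\rho^jA^*\rho^k)$ and the $|\tr(A\tilde\rho)|^2$ piece through products $\tr(A\rho^jP_m)\tr(A^*\rho^kP_n)$, each multiplied by a scalar master integral $\int_0^\infty s^{j}W(s)\,ds=j!\,\mathbb{E}[R^{-(j+1)}]$. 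The denominators in \eqref{eq: var GAP} come from the inverse-moment bound $\mathbb{E}[R^{-k}]\le\mathbb{E}[R^{-(k-1)}]/(1-kp_{\max})$, which I would obtain by complex Gaussian integration by parts: the identity $\mathbb{E}[|Z_n|^2R^{-k}]=p_n\mathbb{E}[R^{-k}-k|Z_n|^2R^{-k-1}]$, summed over $n$, gives $\mathbb{E}[R^{-(k-1)}]\ge(1-kp_{\max})\mathbb{E}[R^{-k}]$, and induction from $\mathbb{E}[R^0]=1$ yields $\mathbb{E}[R^{-k}]\le\prod_{i=1}^k(1-ip_{\max})^{-1}$. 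This is precisely where $p_{\max}<1/4$ enters: it forces $\rho$ to have rank at least five, so that $R$ is a sum of enough independent exponentials for $\mathbb{E}[R^{-3}]$ to be finite, and it keeps $1-kp_{\max}>0$ for the orders $k\le 3$ that occur.

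The hard part is the bookkeeping, not any individual estimate. The term $\int_0^\infty W(s)|\tr(A\tilde\rho(s))|^2\,ds$ is only $O(1)$ on its own, so the subtraction of $|\tr(A\rho)|^2$ cannot be dropped; it is the cancellation of this leading part that makes the variance $O(\|\rho\|)$. Concretely, writing $\tilde\rho(s)-\rho=-s\rho\tilde\rho(s)$ together with the companion identity $\mathbb{E}[R^{-1}]-1=\int_0^\infty sW(s)\tr(\rho\tilde\rho(s))\,ds$ (which follows from $\partial_s W=-W\tr\tilde\rho$) shows that every surviving contribution carries at least one explicit factor of $s$, hence one extra power of $\rho$. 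The clean leading term $\tr(A\rho A^*\rho)/(1-p_{\max})$ then comes from the off-diagonal part $\sum_{m\neq n}|\langle m|A|n\rangle|^2\tilde p_m\tilde p_n$ of $\tr(A\tilde\rho A^*\tilde\rho)$ via $\tilde p_m\tilde p_n\le p_mp_n$ and a single factor $\mathbb{E}[R^{-1}]$, while the remaining diagonal and cross contributions must be matched, after replacing signed complex trace-products by sums of their moduli (this is why \eqref{eq: var GAP} carries absolute values and is not tight), to the $\rho^2$- and $\rho^3$-terms with the denominators just described. Organizing this collection so that the orders and signs line up — making the $O(1)$ cancellation manifest before applying the inverse-moment bounds — is the technical heart of the argument; here the identity $\partial_s\tilde\rho=-\tilde\rho^2$ is convenient.
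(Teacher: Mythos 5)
Your Gaussian set-up is sound and, at its core, reproduces the paper's (i.e.\ Reimann's) computation in different clothing: after tilting, your $\int_0^\infty W(s)\,\tilde p_m(s)\tilde p_n(s)\,ds$ is exactly the kernel $K_{mn}=\int_0^\infty(1+xp_m)^{-1}(1+xp_n)^{-1}\prod_l(1+xp_l)^{-1}dx$ that the paper imports from Reimann, your master integrals $j!\,\mathbb{E}[R^{-(j+1)}]$ are the paper's $K^{(j)}$, and your iterated resolvent identity for $\tilde\rho(s)$ plays the role of the second-order Taylor expansion of $K_{mn}$ with Lagrange remainder. Your derivation of $\mathbb{E}[R^{-k}]\le\prod_{i=1}^k(1-ip_{\max})^{-1}$ by complex Gaussian integration by parts is correct and is genuinely nicer than the route the paper takes (a pointwise bound $\prod_l(1+xp_l)^{-1}\le(1+xp_{\max})^{-1/p_{\max}}$ followed by partial integrations); your observation that $p_{\max}<1/4$ forces $\mathrm{rank}\,\rho\ge5$ and hence finiteness of the inverse moments is also correct and worth making explicit.

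The genuine gap is in the cancellation of the $O(1)$ part, which you yourself flag as the technical heart but do not close --- and, as proposed, it will not close to the stated inequality. The paper first replaces $A$ by $A-\tr(A\rho)\mathbbm{1}$ (the variance is invariant under adding constants), so that $\sum_m A_{mm}p_m=0$ kills the zeroth- \emph{and} first-order terms of the diagonal sum $\sum_{m,n}A_{mm}A^*_{nn}p_mp_nK_{mn}$ identically; what survives is exactly the second-order remainder $2K^{(2)}\sum_{m,n}A_{mm}A^*_{nn}\kappa_{mn}(p_m^3p_n+p_m^2p_n^2+p_mp_n^3)$, which is precisely the $\sum_{m,n}|\tr(A\rho^3P_m)\tr(A^*\rho P_n)|+\dots$ block of \eqref{eq: var GAP}, of order $\|A\|^2\|\rho\|^2$. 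Your scheme keeps $A$ uncentered and cancels $|\tr(A\rho)|^2$ against $\int_0^\infty W|\tr(A\tilde\rho)|^2ds$ via $\mathbb{E}[R^{-1}]-1=\int_0^\infty sW\tr(\rho\tilde\rho)\,ds$. Each surviving term does carry a factor of $s$, but the leftovers are of the form
\begin{align*}
\int_0^\infty sW(s)\Bigl(|\tr(A\rho)|^2\tr(\rho\tilde\rho(s))-2\,\RE\bigl[\overline{\tr(A\rho)}\,\tr(A\rho\tilde\rho(s))\bigr]\Bigr)ds+\int_0^\infty s^2W(s)\,|\tr(A\rho\tilde\rho(s))|^2\,ds,
\end{align*}
which are of order $\|A\|^2\|\rho\|$ and involve $|\tr(A\rho)|$ explicitly; they are neither equal to nor dominated by the $O(\|A\|^2\|\rho\|^2)$ product-of-traces terms on the right-hand side of \eqref{eq: var GAP}. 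So your bookkeeping yields a bound of the correct leading order but not the stated one; to land on \eqref{eq: var GAP} you need the centering step. Two smaller omissions: you never address the infinite-dimensional case (the paper spends a page approximating $\rho$ in trace norm by finite-rank $\rho_k$ and using weak convergence $\GAP(\rho_k)\Rightarrow\GAP(\rho)$; your direct Gaussian computation could plausibly avoid this, but the interchanges of infinite sums, Wick contractions and the $s$-integral then need justification), and the identity $\mathbb{E}_\rho[f]=\mathbb{E}[\|\Psi^G\|^2f(\Psi^G/\|\Psi^G\|)]$ should be cited rather than asserted, since it is the adjusted-and-projected structure of $\GAP(\rho)$ that makes it true.
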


\begin{proof}[Proof of Proposition~\ref{prop: Var inequalities}]
We first assume that $\Hilbert$ is finite-dimensional. The proof starts as the one of Proposition~1 in \cite{TTV23phys} and we find that
\begin{subequations}
\begin{align}
    \left\langle\left|\langle\psi_t|B|\psi_t\rangle - \overline{\langle\psi_t|B|\psi_t\rangle} \right|^2\right\rangle_T
    &=\left\langle\left|\sum_{e\neq e'} e^{i(e-e')t} \langle\psi_0|\Pi_e B \Pi_{e'}|\psi_0\rangle \right|^2\right\rangle_T \\    
    &= \sum_{\substack{e\neq e'\\ e''\neq e'''}} \left\langle e^{i(e-e'-e''+e''')t}\right\rangle_T \langle\psi_0|\Pi_e B \Pi_{e'}|\psi_0\rangle \langle\psi_0|\Pi_{e'''} B^*\Pi_{e''}|\psi_0\rangle\\
    &=:\sum_{\alpha,\beta} v_\alpha^* R_{\alpha\beta} v_\beta,
\end{align}
\end{subequations}
where we defined for $\alpha=(e,e'), \beta = (e'',e''')\in \mathcal{G}:=\{(\bar{e},\bar{e}')\in\mathcal{E}\times \mathcal{E}, \bar{e}\neq \bar{e}'\}$ the vector $v_\alpha=\langle\psi_0|\Pi_{e'}B^* \Pi_e|\psi_0\rangle$ and the Hermitian matrix
\begin{align}
    R_{\alpha\beta} = \left\langle e^{i(G_\alpha-G_\beta)t} \right\rangle_T\label{eq: Ralphabeta}
\end{align}
with $G_\alpha=e-e'$. Written in this way, we immediately see that
\begin{align}
    \left\langle\left|\langle\psi_t|B|\psi_t\rangle - \overline{\langle\psi_t|B|\psi_t\rangle} \right|^2\right\rangle_T \leq \|R\| \sum_{\alpha} |v_\alpha|^2 \leq \|R\| \sum_{e, e'} \left|\langle\psi_0|\Pi_e B \Pi_{e'}|\psi_0\rangle\right|^2.
\end{align}
It was shown rigorously by Short and Farrelly \cite{SF12} that the operator norm of $R$ can be bounded as
\begin{align}
    \|R\| \leq G(\kappa) \left(1+\frac{8\log_2 d_E}{\kappa T}\right).
\end{align}
We further compute
\begin{align}
    \mathbb{E}_\rho\left(\sum_{e,e'} |\langle\psi_0|\Pi_e B \Pi_{e'}|\psi_0\rangle|^2\right) 
    &=\sum_{e,e'} \Var_\rho\langle\psi_0|\Pi_e B \Pi_{e'}|\psi_0\rangle + \left|\tr(\rho \Pi_e B \Pi_{e'}) \right|^2,
\end{align}
where we applied Lemma~\ref{lem: var GAP}. It follows again from Lemma~\ref{lem: var GAP} that
\begin{align}
    &\sum_{e,e'}\Var_\rho\langle\psi_0|\Pi_e B \Pi_{e'}|\psi_0\rangle\nonumber\\
    &\leq \frac{1}{1-p_{\max}}\sum_{e,e'}\Bigl(\tr(\Pi_e B \Pi_{e'}\rho \Pi_{e'}B^* \Pi_e \rho) + \frac{\tr(\Pi_e B \Pi_{e'}\rho^2 \Pi_{e'}B^* \Pi_e \rho) + \tr(\Pi_{e} B \Pi_{e'} \rho \Pi_{e'}B^*\Pi_e \rho^2)}{1-2p_{\max}}\nonumber\\
    &\quad + \frac{2}{(1-2p_{\max})(1-3p_{\max})}\left[\tr(\Pi_e B \Pi_{e'}\rho^3 \Pi_{e'} B^* \Pi_e \rho) + \tr(\Pi_e B \Pi_{e'}\rho^2 \Pi_{e'} B^* \Pi_e \rho^2)\right.\nonumber\\
    &\quad \left. + \tr(\Pi_e B \Pi_{e'}\rho \Pi_{e'} B^* \Pi_e \rho^3)+ \sum_{m,n}\left(|\tr(\Pi_e B \Pi_{e'}\rho^3 P_m) \tr(\Pi_{e'}B^*\Pi_e\rho P_n)|\right.\right.\nonumber\\
    &\quad \left.\left.+ |\tr(\Pi_e B \Pi_{e'}\rho^2 P_m) \tr(\Pi_{e'}B^*\Pi_e\rho^2 P_n)| + |\tr(\Pi_e B \Pi_{e'}\rho P_m) \tr(\Pi_{e'} B^* \Pi_e \rho^3 P_n)|\right) \right]\Bigr).\label{ineq: sum Var}
\end{align}
The main tools for bounding the traces are the Cauchy-Schwarz inequality for the trace which states that $|\tr(A^*B)|\leq\sqrt{\tr(A^*A)\tr(B^*B)}$ for any operators $A$ and $B$ and the inequality $|\tr(AB)|\leq \|A\| \tr(|B|)$; for the latter see, e.g., Theorem~3.7.6 in \cite{simon}. We compute
\begin{subequations}
\begin{align}
    \sum_{e,e'} \tr(\Pi_e B \Pi_{e'}\rho \Pi_{e'} B^* \Pi_e \rho) &= \sum_e \tr\left(B \left(\sum_{e'}\Pi_{e'}\rho \Pi_{e'}\right) B^* \Pi_e \rho \Pi_e\right)\\
    &\leq \sum_e \|B\|^2 \left\|\sum_{e'} \Pi_{e'}\rho\Pi_{e'}\right\| \tr(\Pi_e \rho \Pi_e)\\
    &\leq \|B\|^2 \|\rho\|,
\end{align}
\end{subequations}
where we used that $\|\sum_{e'}\Pi_{e'}\rho\Pi_{e'}\|\leq \|\rho\|$. All other sums over traces in \eqref{ineq: sum Var} which involve no $P_n$ can be estimated in a similar way.

Next we turn to the products of traces; we find that
\begin{subequations}
\begin{align}
    \sum_{e,e'} \sum_{m,n} &|\tr(\Pi_e B \Pi_{e'}\rho^3 P_m)\tr(\Pi_{e'}B^*\Pi_e\rho P_n)|\nonumber\\
    &= \sum_{m,n}\sum_{e,e'} |\tr(\Pi_e B \Pi_{e'} \Pi_{e'} P_m \Pi_e)\tr(\Pi_{e'}B^*\Pi_e \Pi_e P_n \Pi_{e'})| p_m^3 p_n\\
    &\leq \sum_{m,n} p_m^3 p_n\sum_{e,e'} \left(\tr(\Pi_e B \Pi_{e'} B^*) \tr(\Pi_e P_m \Pi_{e'} P_m)\tr(\Pi_{e'}B^*\Pi_e B)\tr(\Pi_{e'}P_n \Pi_e P_n)\right)^{1/2}\\
    &= \sum_{m,n} p_m^3 p_n \sum_{e,e'} \underbrace{\tr(\Pi_e B \Pi_{e'} B^*)}_{\leq \|B\|^2 D_E}\left(\langle m|\Pi_e|m\rangle\langle m|\Pi_{e'}|m\rangle\langle n|\Pi_e|n\rangle\langle n|\Pi_{e'}|n\rangle\right)^{1/2}\\
    &\leq \|B\|^2 D_E \sum_{m,n} p_m^3 p_n \left(\sum_{e}\left(\langle m|\Pi_e|m\rangle \langle n|\Pi_e|n\rangle\right)^{1/2}\right)^2\\
    &\leq \|B\|^2 D_E \sum_{m,n} p_m^3 p_n \left(\sum_e \langle m|\Pi_e|m\rangle\right) \left(\sum_e \langle n|\Pi_e|n\rangle\right)\\
    &\leq \|B\|^2 D_E \|\rho\|^2
\end{align}
\end{subequations}
and similarly for the other products of traces.

Finally we estimate
\begin{align}
    \sum_{e,e'} |\tr(\rho \Pi_e B \Pi_{e'})|^2 
    &\leq \sum_{e,e'} \tr(\Pi_e B \Pi_{e'} B^*) \tr(\Pi_e \rho \Pi_{e'}\rho) \leq \|B\|^2 D_E \|\rho\|.
\end{align}
Putting everything together we arrive at
\begin{subequations}
\begin{align}
    \mathbb{E}_\rho&\left(\left\langle\left|\langle\psi_t|B|\psi_t\rangle - \overline{\langle\psi_t|B|\psi_t\rangle} \right|^2\right\rangle_T\right)\nonumber\\  &\leq \|B\|^2 \|\rho\| D_E G(\kappa) \left(1+\frac{8\log_2 d_E}{\kappa T}\right)\left(1+\frac{1}{1-\|\rho\|}\left[1+\frac{2\|\rho\|}{1-2\|\rho\|} + \frac{6(\|\rho\|+\|\rho\|^2)}{(1-2\|\rho\|)(1-3\|\rho\|)}\right]\right)\label{ineq: Erho}\\
    &\leq 24 \|B\|^2 \|\rho\| D_E G(\kappa)\left(1+\frac{8\log_2 d_E}{\kappa T}\right),\label{ineq: Erho time var}
\end{align}
\end{subequations}
where we used the assumption that $\|\rho\|< 1/4$ to bound the bracket $(1+\frac{1}{1-\|\rho\|}[...])$ in \eqref{ineq: Erho} by\footnote{Note that the bracket is a monotone increasing function in $\|\rho\|$ and therefore bounded by its value for $\|\rho\|=1/4$ which is $71/3<24$.} 24. The bound in \eqref{ineq: prop1 second} can be obtained by estimates similar to the ones in the beginning of this proof. For the statement regarding the infinite time average, we want to take the limit $T\to\infty$ in \eqref{ineq: Erho time var}. To this end, we first choose $\kappa$ so small that $G(\kappa)=D_G$. Then we take the limit $T\to\infty$ and use dominated convergence to interchange the limit and $\mathbb{E}_\rho$. This gives \eqref{eq: exp inf time av}.

For the variance of $\overline{\langle\psi_t|B|\psi_t\rangle}$ we see from \eqref{eq: Mpsi0B} that we can simply apply Lemma~\ref{lem: var GAP} with $A=\sum_{e}\Pi_e B \Pi_e$ and estimate the occurring traces. We compute
\begin{align}
    \tr(A\rho A^* \rho) \leq \| A \rho A^*\| \tr\rho \leq \|B\|^2 \|\rho\|,
\end{align}
where we used that $\|A\| \leq \|B\|$. The other traces without $P_n$ can be estimated analogously. 

Moreover, we get
\begin{align}
    \sum_{m,n} \left|\tr(A\rho^3 P_m) \tr(A^*\rho P_n) \right| \leq \sum_{m,n} p_m^3 p_n \|B\|^2 \tr(P_m) \tr(P_n) \leq \|B\|^2 \|\rho\|^2
\end{align}
and similarly for the other remaining terms.

Thus we finally arrive at
\begin{subequations}
\begin{align}
    \Var_\rho\overline{\langle\psi_t|B|\psi_t\rangle} &\leq \frac{\|B\|^2 \|\rho\|}{1-\|\rho\|}\left(1+\frac{2\|\rho\|}{1-2\|\rho\|} + \frac{6(\|\rho\|+\|\rho\|^2)}{(1-2\|\rho\|)(1-3\|\rho\|)}\right)\leq 23 \|B\|^2 \|\rho\|.
\end{align}
\end{subequations}
Now suppose that $\Hilbert$ is infinite-dimensional. Due to our assumption on $B$, all sums over $e\in\mathcal{E}$ are effectively sums over $e\in\mathcal{E}_B$ and therefore finite. In the finite-dimensional setting, the matrix $R$ defined by \eqref{eq: Ralphabeta} is a $d_E(d_E-1)\times d_E(d_E-1)$ matrix and this is where the $d_E$ in the upper bound for $\|R\|$ comes from, see the proof in \cite{SF12}. In the infinite-dimensional setting, the matrix $R$ can be viewed as a $d_{E,B}(d_{E,B}-1)\times d_{E,B}(d_{E,B}-1)$ matrix and therefore, in the bound for $\|R\|$, $d_E$ has to be replaced by $d_{E,B}$; by the same argument, $d_E$ can also be replaced by $d_{E,B}$ in the finite-dimensional setting. Moreover, we can also change $G(\kappa)$ to $G_B(\kappa)$, $D_E$ to $D_{E,B}$ and $D_G$ to $D_{G,B}$ as the only ``contributing'' eigenvalues are the ones in $\mathcal{E}_{B}$. Besides these changes all other steps of the proof remain valid and this proves the bounds also in the infinite-dimensional case.
\end{proof}

\subsection{Proof of Lemma~\ref{lem: var GAP}\label{subsec: proof Lemma 1}}
\begin{proof}
The formula for the expectation has already been proven in \cite[Proposition 1]{TTV24}. For the proof of \eqref{eq: var GAP} we adapt the proof for the upper bound on the variance from \cite{TTV24} which in turn follows closely the proof of Reimann~\cite{Reimann08} who only considered the case that $A$ is self-adjoint and $\Hilbert$ is finite-dimensional. We remark that Reimann's arguments are mostly rigorous and whenever we make use of a result from his paper, we provide, if necessary, the details that are needed to make the arguments fully rigorous in a footnote.

We first assume that $D:=\dim\Hilbert<\infty$.
Recall that the variance of a complex-valued random variable $Z$ is defined as
\begin{align}
    \Var_\rho(Z) = \mathbb{E}_\rho(|Z-\mathbb{E}_\rho Z|^2) = \mathbb{E}_\rho|Z|^2 -|\mathbb{E}_\rho Z|^2.
\end{align}
Because of the invariance of $\Var_\rho$ under the addition of constants, i.e., $\Var_\rho(Z+a) = \Var_\rho(Z)$ for all constants $a\in\mathbb{C}$, we can assume without loss of generality that $\mathbb{E}_\rho\langle\psi|A|\psi\rangle = \tr(A\rho)=0$. As a consequence, we only have to compute $\mathbb{E}_\rho(\langle\psi|A|\psi\rangle \langle\psi|A|\psi\rangle^*) = \mathbb{E}_\rho(\langle\psi|A|\psi\rangle\langle\psi|A^*|\psi\rangle)$.

Let $\{|n\rangle: n=1,\dots,D\}$ be an orthonormal basis of $\Hilbert$ consisting of eigenvectors of $\rho$ and let $\psi\in\mathbb{S}(\Hilbert)$. Then,
\begin{subequations}
\begin{align}
    \langle\psi|A|\psi\rangle \langle\psi|A^*|\psi\rangle &= \sum_{m,n,m',n'} \langle\psi|m\rangle\langle m|A|n\rangle \langle n|\psi\rangle \langle\psi|n'\rangle\langle n'|A^*|m'\rangle\langle m'|\psi\rangle\\
    &= \sum_{m,n,m',n'} c_m^* c_n c_{n'}^* c_{m'} A_{mn} A_{m'n'}^*,
\end{align}
\end{subequations}
where $c_m=\langle m|\psi\rangle$ and $A_{mn} = \langle m|A|n\rangle$.

Reimann~\cite{Reimann08} computed\footnote{The only not fully rigorous step in his computation is when he interchanges integration and partial derivatives in \cite[(43)]{Reimann08}. This step can easily be made rigorous by already assuming here that the dimension satisfies $D\geq 4$, an assumption that is made later in \cite[(54)]{Reimann08} anyway. Then one can check that all conditions of the Leibniz integral rule are satisfied which justifies the interchange of integration and differentiation.} the fourth moments $\mathbb{E}_\rho (c_m^* c_n c_{n'}^* c_{m'})$ and obtained that they vanish except in the two cases (i) $m=n$ and $m'=n'$, (ii) $m=m'$ and $n=n'$, and that
\begin{align}
    \mathbb{E}_\rho \left(|c_m|^2 |c_n|^2\right) = p_m p_n (1+\delta_{mn}) K_{mn},
\end{align}
where 
\begin{align}
    K_{mn} = \int_0^\infty (1+xp_m)^{-1} (1+xp_n)^{-1} \prod_{l=1}^D (1+xp_l)^{-1}\, dx.
\end{align}
We compute
\begin{subequations}
\begin{align}
    \mathbb{E}_\rho\left(\langle\psi|A|\psi\rangle\langle\psi|A^*|\psi\rangle\right) &= \sum_{m,n} p_n p_m (1+\delta_{mn}) K_{mn} A_{mm} A_{nn}^* + \sum_{m,n} p_n p_m (1+\delta_{mn}) K_{mn} |A_{mn}|^2 \nonumber\\
    &\quad- 2 \sum_n p_n^2 K_{nn} |A_{nn}|^2\\
    &= \sum_{m,n}\left[A_{mm} A_{nn}^* + |A_{mn}|^2\right] p_n p_m K_{mn}\label{eq: Erho psiApsi}
\end{align}
\end{subequations}
and as in \cite{Reimann08} we write $K_{mn}$ as
\begin{align}
    K_{mn} = K^{(0)} - (p_m+p_n)K^{(1)} + 2 \kappa_{mn}\left(p_m^2+p_m p_n + p_n^2\right)K^{(2)},\label{eq: Kmn expansion}
\end{align}
where 
\begin{align}
    K^{(k)} = \frac{1}{k!} \int_0^\infty x^k \prod_{l=1}^D (1+xp_l)^{-1} dx, \quad k=0,1,2,
\end{align}
and $\kappa_{mn}\in [0,1]$. This follows rigorously from a Taylor expansion of $g_{mn}(x):=(1+xp_m)^{-1}(1+xp_n)^{-1}$ around 0 up to second order.

With \eqref{eq: Kmn expansion} we obtain for the first term in \eqref{eq: Erho psiApsi} that 
\begin{align}
   \sum_{m,n} A_{mm} &A_{nn}^* p_n p_m \left(K^{(0)} - (p_m+p_n) K^{(1)} + 2\kappa_{mn}(p_m^2+p_m p_n + p_n^2) K^{(2)}\right)\nonumber\\
   &= 2 K^{(2)} \sum_{m,n} A_{mm} A_{nn}^* \kappa_{mn}\left(p_m^3 p_n + p_m^2 p_n^2 + p_m p_n^3\right),\label{eq: var gap first}
\end{align}
where we used that $\sum_m A_{mm} p_m = \tr(A\rho)=0$
and similarly $\sum_n A^*_{nn} p_n=0$. Therefore an upper bound for the absolute value of the first term in \eqref{eq: Erho psiApsi} is given by
\begin{align}
    &2 K^{(2)}\sum_{m,n}\left(|\tr(A\rho^3 P_m)\tr(A^* \rho P_n)| + |\tr(A\rho^2 P_m)\tr(A^*\rho^2 P_n)| + |\tr(A\rho P_m)\tr(A^* \rho^3 P_n)|\right),
\end{align}
where $P_n = |n\rangle\langle n|$. Here we used e.g. for the first term that $A_{mm} A^*_{nn} p_m^3 p_n = \tr(A\rho^3 P_m) \tr(A^*\rho P_n)$.

With the help of similar computations the second term in \eqref{eq: Erho psiApsi} can be written as
\begin{subequations}
\begin{align}
    &K^{(0)} \tr(A\rho A^*\rho) - K^{(1)}\left(\tr(A\rho^2 A^*\rho)+\tr(A\rho A^*\rho^2)\right)\nonumber\\
    &\quad + 2 K^{(2)} \sum_{m,n} |A_{mn}|^2 \kappa_{mn}(p_n p_m^3 + p_m^2 p_n^2 + p_n^3 p_m)\label{eq: var gap second}\\
&\leq K^{(0)} \tr(A\rho A^* \rho) + K^{(1)} (\tr(A\rho^2 A^*\rho) + \tr(A\rho A^*\rho^2))\nonumber\\
    &\quad + 2 K^{(2)} \left(\tr(A\rho^3 A^*\rho) + \tr(A\rho^2 A^* \rho^2)+\tr(A\rho A^*\rho^3)\right).
\end{align}
\end{subequations}
Thus we arrive at
\begin{align}
    &\Var_\rho \langle\psi|A|\psi\rangle\nonumber\\
    &\quad\leq K^{(0)} \tr(A\rho A^* \rho) + K^{(1)} \left(\tr(A\rho^2 A^*\rho)+\tr(A\rho A^*\rho^2) \right)\nonumber\\
    &\quad+ 2 K^{(2)}\left(\tr(A\rho^3 A^*\rho) + \tr(A\rho^2 A^* \rho^2)+\tr(A\rho A^*\rho^3)\right.\nonumber\\
    &\quad\left. + \sum_{m,n}\left(|\tr(A\rho^3 P_m)\tr(A^* \rho P_n)| + |\tr(A\rho^2 P_m)\tr(A^*\rho^2 P_n)| + |\tr(A\rho P_m)\tr(A^* \rho^3 P_n)|\right)\right).
\end{align}
It follows from the estimates\footnote{More precisely, Reimann shows that for any $x\geq 0$ the function $G(x) := \prod_{l=1}^D (1+xp_k)^{-1}$ can be bounded as
\begin{align}
    G(x) \leq (1+xp_{\max})^{-1/p_{\max}},\label{ineq: G bound}
\end{align}
see \cite[(70)]{Reimann08}.
Then \eqref{ineq: K bound} follows after performing $k$ partial integrations. For the proof of \eqref{ineq: G bound} Reimann first heuristically argues that 
\begin{align}
    G(x) \leq (1+xp_{\max})^{-N_{\max}} (1+xp_0)^{-1},\label{ineq: G intermediate}
\end{align}
where $N_{\max} = \lfloor \frac{1}{p_{\max}}\rfloor$ and $p_0:=1-N_{\max}p_{\max}$ and then rigorously derives \eqref{ineq: G bound} from \eqref{ineq: G intermediate}. The bound in \eqref{ineq: G intermediate} can be seen rigorously as follows: For
$p=(p_1,\dots,p_D) \in \mathcal{C}:=\left\{y\in\mathbb{R}^D: 0\leq y_n\leq p_{\max} \, \forall n \mbox{ and } \sum_l y_l=1\right\}$ and fixed $x\geq 0$ we define $Q(p):= \prod_{l=1}^D (1+xp_l)^{-1}$. Maximizing $Q(p)$ is equivalent to minimizing the function $S(p) := \sum_{l=1}^D \ln(1+xp_l)$. Since $S$ is continuous and concave and $\mathcal{C}$ is convex and compact, it follows from Bauer's minimum principle that $S$ attains its minimum at an extreme point of $\mathcal{C}$. The extreme points of $\mathcal{C}$ are the permutations of $(p_{\max},\dots,p_{\max}, p_0, 0,\dots)$ and this immediately implies~\eqref{ineq: G intermediate}.
} in \cite{Reimann08} that
\begin{align}
    K^{(k)} \leq \prod_{j=1}^{k+1} \frac{1}{1-jp_{\max}}, \quad k=0,1,2.\label{ineq: K bound}
\end{align}
With the help of this inequality we finally obtain \eqref{eq: var GAP} in the finite-dimensional case.

Now suppose that $\dim\Hilbert = \infty$. We proceed similarly as in the proof of Proposition~1 in \cite{TTV24}. Let $\{|k\rangle : k\in\mathbb{N}\}$ be an orthonormal eigenbasis of $\rho$ with corresponding positive eigenvalues $p_{\max} = \|\rho\| = p_1\geq p_2\geq \dots~$. We approximate $\rho$ in trace norm by finite-rank density matrices $\rho_k$, $k\in\mathbb{N}$, defined by
\begin{align}
    \rho_k := \sum_{m=1}^{k-1} p_m|m\rangle\langle m| + \left(\sum_{m=k}^{\infty}p_m\right) |k\rangle\langle k|.
\end{align}
Because of $\|\rho_k-\rho\|_{\tr}\to 0$ as $k\to\infty$, it follows from Theorem~3 in \cite{Tum20} that $\GAP(\rho_k)\Rightarrow \GAP(\rho)$, i.e., the measures $\GAP(\rho_k)$ converge weakly to the measure $\GAP(\rho)$. Note that for $k$ large enough, $\sum_{m\geq k} p_m < p_1 = p_{\max}$ and therefore $p_{\max,k} = p_{\max}$ where $p_{\max,k}$ denotes the largest eigenvalue of $\rho_k$. We define the functions $f_k,f:\mathbb{S}(\Hilbert)\to\mathbb{R}$ by
\begin{align}
    f_k(\psi) &= \left|\langle\psi|A|\psi\rangle - \tr(A\rho_k)\right|^2,\\
    f(\psi) &= \left|\langle\psi|A|\psi\rangle- \tr(A\rho)\right|^2.
\end{align}
It follows from
\begin{align}
    \left|\tr(A\rho_k)-\tr(A\rho) \right| \leq \|A\| \|\rho_k-\rho\|_{\tr} \to 0 \label{ineq: conv tr A rhok}
\end{align}
that $f_k \to f$ uniformly in $\psi$ and this implies $\GAP(\rho_k)(f_k) - \GAP(\rho_k)(f)\to 0$ where we introduced the notation
\begin{align}
    \GAP(\rho)(f) := \int_{\mathbb{S}(\Hilbert)} f(\psi)\; d\GAP(\rho)(\psi).
\end{align}
As the measures $\GAP(\rho_k)$ converge weakly to $\GAP(\rho)$ and $f$ is continuous, we have that $\GAP(\rho_k)(f)\to \GAP(\rho)(f)$ and therefore $\Var_{\rho_k}\langle\psi|A|\psi\rangle = \GAP(\rho_k)(f_k) \to \GAP(\rho)(f) = \Var_\rho\langle\psi|A|\psi\rangle$. By restricting to the subspace spanned by $\{|n\rangle: n=1,\dots,k\}$ we obtain from the finite-dimensional case that \eqref{eq: var GAP} holds with $\rho$ and $p_{\max}$ replaced by $\rho_k$ and $p_{\max,k}$ respectively,
where we chose $k$ large enough such that $p_{\max,k}<1/4$. We have already discussed that $p_{\max,k}=p_{\max}$ for $k$ large enough. Moreover, using $[\rho_k,\rho]=0$, we find 
\begin{align}
    \|\rho_k^2-\rho^2\|_{\tr} &= \left\|(\rho_k+\rho)(\rho_k-\rho)\right\|_{\tr} \leq \|\rho_k+\rho\| \|\rho_k-\rho\|_{\tr} \leq 2 \|\rho_k-\rho\|_{\tr} \to 0,
\end{align}
and similarly $\|\rho_k^3-\rho^3\|_{\tr}\to 0$. Therefore it follows from estimates as in \eqref{ineq: conv tr A rhok} that all traces in \eqref{eq: var GAP} for $\rho_k$ converge to the corresponding traces with $\rho$ instead of $\rho_k$. Altogether we thus see by taking the limit $k\to\infty$ in \eqref{eq: var GAP} for $\rho_k$ that the upper bound for the variance remains true also in the infinite-dimensional case.
\end{proof}

\section{Summary and Conclusions\label{sec: conclusion}}
Our result concerns the long-time behavior of $\GAP(\rho)$-typical pure states $\psi_0$ from the sphere of a separable Hilbert space where $\rho$ is a density matrix on $\Hilbert$ and $\psi_0$ evolves unitarily according to $\psi_t=\exp(-iHt)\psi_0$. We have seen that for any bounded operator $B$ on $\Hilbert$ that satisfies $|\mathcal{E}_B|<\infty$, for $\GAP(\rho)$-most $\psi_0\in\mathbb{S}(\Hilbert)$, the curve $t\mapsto \langle\psi_t|B|\psi_t\rangle$ is nearly constant in the long run $t\to\infty$ provided that $\|\rho\|$ is small, $\|B\|$ is not too large compared to $|M_{\rho B}|$ and no eigenvalue or eigenvalue gap of the Hamiltonian $H$, which is assumed to have pure point spectrum, is too highly degenerate. In particular, we have argued that our result can be applied to the case that $B=P_\nu$ for some macro state $\nu$. We have provided explicit error bounds that reveal the dependence of the error on these quantities. Our result shows that the concept of normal typicality can be generalized from the uniform distribution on the sphere of the Hilbert space to a much broader class of distributions which can also be defined on infinite-dimensional Hilbert spaces and which for certain density matrices arise naturally as the distribution of wave functions in thermal and chemical equilibrium.

\bigskip

\noindent \textbf{Acknowledgments.} It is a pleasure to thank Stefan Teufel and Roderich Tumulka for their helpful discussions. Financial support from the German Academic Scholarship Foundation is gratefully acknowledged. Moreover, this work has been partially supported by the ERC Starting Grant ``FermiMath", grant agreement nr. 101040991, funded by the European Union. Views and opinions expressed are however those of the author only and do not necessarily reflect those of the European Union or the European Research Council Executive Agency. Neither the European Union nor the granting authority can be held responsible for them.
\\
\\
\noindent\textbf{Data Availability Statement.}
Data sharing is not applicable to this article as no datasets were generated or analyzed.
\\
\\
\noindent\textbf{ Conflict of Interest Statement.}
The authors have no conflicts to disclose.

\bibliographystyle{plainurl}
\bibliography{LiteratureNTGAP.bib}

\end{document}